\newtheorem{definition}{Definition}[section]
\newtheorem{proposition}[definition]{Proposition}
\newtheorem{lemma}[definition]{Lemma}
\newtheorem{theorem}{Theorem}
\newcommand{\ot}{\otimes}
\newcommand{\be}{\begin{equation}}
\newcommand{\ee}{\end{equation}}
\def\hcal{{\cal H}}
\def\BQP{{\sf{BQP}}}
\def\PH{{\sf{PH}}}
\def\FC{\textsc{Fourier Checking}}
\def\CC{\textsc{Circuit Checking}}
\def\BPP{{\sf{BPP}}}
\def\SZK{{\sf{SZK}}}
\def\NP{{\sf{NP}}}
\def\PSPACE{{\sf{PSPACE}}}
\def\P{{\sf{P}}}
\def\U{{\sf{U}}}
\def\RFS{{\sf{RFS }}}
\def\squareforqed{\hbox{\rlap{$\sqcap$}$\sqcup$}}
\def\qed{\ifmmode\squareforqed\else{\unskip\nobreak\hfil
\penalty50\hskip1em\null\nobreak\hfil\squareforqed
\parfillskip=0pt\finalhyphendemerits=0\endgraf}\fi}
\def\endenv{\ifmmode\;\else{\unskip\nobreak\hfil
\penalty50\hskip1em\null\nobreak\hfil\;
\parfillskip=0pt\finalhyphendemerits=0\endgraf}\fi}
\newenvironment{proof}[1][Proof]{\noindent \textbf{{#1~} }}{\qed}
\newcommand{\bra}[1]{\langle #1|}
\newcommand{\ket}[1]{|#1\rangle}
\newcommand{\braket}[2]{\langle #1|#2\rangle}
\newcommand{\tr}{\text{tr}}
\newcommand{\id}{\mathbb{I}}
\mathchardef\ordinarycolon\mathcode`\:
\def\vcentcolon{\mathrel{\mathop\ordinarycolon}}
\newcommand{\nc}{\newcommand}
\nc{\rnc}{\renewcommand} \nc{\beq}{\begin{equation}}
\nc{\eeq}{{\end{equation}}} \nc{\bea}{\begin{eqnarray}}
\nc{\eea}{\end{eqnarray}} \nc{\beqa}{\begin{eqnarray}}
\nc{\eeqa}{\end{eqnarray}} \nc{\lbar}[1]{\overline{#1}}
\nc{\conv}{\operatorname{conv}}
\nc{\smfrac}[2]{\mbox{$\frac{#1}{#2}$}} \nc{\Tr}{\operatorname{Tr}}
\nc{\ox}{\otimes} \nc{\dg}{\dagger} \nc{\dn}{\downarrow}
\nc{\lmax}{\lambda_{\text{max}}}
\nc{\lmin}{\lambda_{\text{min}}}
\nc{\csupp}{{\operatorname{csupp}}}
\nc{\qsupp}{{\operatorname{qsupp}}} \nc{\var}{\operatorname{var}}
\nc{\rar}{\rightarrow} \nc{\lrar}{\longrightarrow}
\nc{\poly}{\operatorname{poly}}
\nc{\polylog}{\operatorname{polylog}} \nc{\Lip}{\operatorname{Lip}}
\nc{\mb}[1]{\mathbf{#1}}
\nc{\ep}{\epsilon}
\nc{\Om}{\Omega}
\nc{\wt}[1]{\widetilde{#1}}
\def\>{\rangle}
\def\<{\langle}
\nc{\glneq}{{\raisebox{0.6ex}{$>$}  \hspace*{-1.8ex} \raisebox{-0.6ex}{$<$}}}
\nc{\gleq}{{\raisebox{0.6ex}{$\geq$}\hspace*{-1.8ex} \raisebox{-0.6ex}{$\leq$}}}
\nc{\vholder}[1]{\rule{0pt}{#1}}
\nc{\wh}[1]{\widehat{#1}}
\nc{\h}[1]{\widehat{#1}}
\nc{\ob}[1]{#1}
\def\beq{\begin {equation}}
\def\eeq{\end {equation}}
\def\be{\begin{equation}}
\def\ee{\end{equation}}
\nc{\eq}[1]{Eq.~(\ref{eq:#1})} \nc{\eqs}[2]{Eqs.~(\ref{eq:#1}) and
(\ref{eq:#2})}
\nc{\eqn}[1]{Eq.~(\ref{eqn:#1})}
\nc{\eqns}[2]{Eqs.~(\ref{eqn:#1}) and (\ref{eqn:#2})}
\nc{\region}{\cS\cW}
\begin{document}

\title{{\Large Exponential Quantum Speed-ups are Generic}}

\author{Fernando G.S.L. Brand\~ao}
\email{fgslbrandao@gmail.com}
\affiliation{Departamento de F\'isica, Universidade Federal de Minas Gerais,
     Belo Horizonte, Caixa Postal 702, 30123-970, MG, Brazil}
\author{Micha\l{} Horodecki}
 \email{fizmh@ug.edu.pl}
\affiliation{Institute for Theoretical Physics and Astrophysics, University of Gda\'nsk, 80-952 Gda\'nsk, Poland}


\begin{abstract}

A central problem in quantum computation is to understand which quantum circuits are useful for exponential speed-ups over classical computation. 
We address this question in the setting of query complexity and show that for almost any sufficiently long quantum circuit one can construct a black-box problem 
which is solved by the circuit with a constant number of quantum queries, but which requires exponentially many classical queries, even if the classical machine has the ability to postselect.  

We prove the result in two steps. In the first, we show that almost any element of an approximate unitary 3-design is useful to solve a certain black-box problem efficiently. The problem is based on a recent oracle construction of Aaronson and gives
an exponential separation between quantum and classical post-selected bounded-error query complexities. 

In the second step, which may be of independent interest, we prove that linear-sized random quantum circuits give an approximate unitary 3-design. The 
key ingredient in the proof is a technique from quantum many-body theory to lower bound the spectral gap of local quantum Hamiltonians. 

\end{abstract}

\maketitle

\parskip .75ex


\section{Introduction}


Quantum computation holds the promise of solving certain problems substantially faster than classical computation. The most famous example is arguably Shor's polynomial-time quantum algorithm for factoring \cite{Sho97}, a task which is believed to require exponential time in a classical computer. Other problems for which quantum algorithms appear to be give exponential speed-ups include simulating quantum systems \cite{Fey82}, solving Pell's equation \cite{Hal02}, approximating the Jones polynomial \cite{FKW02, AJL06}, and estimating certain properties of sparse systems of linear equations \cite{HHL09}.  Unfortunately, the apparent computational superiority of quantum mechanics is presently only conjectural. In fact, one cannot hope to separate the class of problems solved in polynomial time by quantum and classical computation without settling major open questions in computational complexity theory\footnote{such as $\P  \stackrel{?}{=} \PSPACE$.}.      

A setting for which quantum computation is \textit{provably} superior to classical  is the one of \textit{query complexity} (also known as decision tree complexity or black-box complexity). There one is given the ability to query a black-box function  and the goal is to determine a certain property of the function. The complexity of the problem  is measured by the minimum number of queries needed to determine such property. In the quantum case, one is able to query the black-box in superposition, a feature which potentially renders it more powerful than the classical one. 

The first example of a black-box problem exhibiting a superpolynomial separation of  quantum and randomized classical query complexities was the recursive fourier sampling ($\RFS$) problem of Bernstein and Vazirani \cite{BV97}. Soon after it, Simon presented a black-box problem with an \textit{exponential} quantum-classical separation \cite{Sim97}; Simon's problem is also a good example of the usefulness of the query complexity model for the development of new  algorithms: its quantum solution was both a motivation for and an important element in Shor's quantum algorithm for factoring  \cite{Sho97}. Many other oracle separations have since been found, see e.g. \cite{Kit95, ME99, dBCW02, vDH00, Wat01, CCD+03}. In terms of complexity classes, these query complexity results show the existence of an oracle $U$ for which  $\BQP^U \neq \BPP^U$.\footnote{see the complexity zoo (http://qwiki.stanford.edu/wiki/Complexity$\_$Zoo) for definitions of the standard complexity classes.}

Having collected  evidence that quantum computation is superior to randomized classical computation, it is interesting to get insight about where exactly does $\BQP$ sit in the zoo of classical complexity classes. For example, are there problems that a quantum computer can solve efficiently, but which a classical computer cannot even check a potential solution in reasonable time? This is the question whether $\BQP  \subseteq \NP$  and already in the seminal paper \cite{BV97}, the $\RFS$ problem was used to build an oracle $U$ such that $\BQP^U  \nsubseteq \NP^U$. One can go even further and ask for an oracle for which $\BQP$ is not contained in the entire polynomial hierarchy (\PH). In \cite{BV97} it was conjectured that the $\RFS$ problem also gives an oracle relative to which $\BQP  \nsubseteq \PH$, but whether this is indeed the case remains an open question.

Recently, Aaronson  \cite{Aar09}  proposed an interesting new oracle problem as a candidate to put $\BQP$ outside $\PH$.\footnote{In \cite{Aar09} it was shown that the separation would follow from a certain generalization of the Linial-Nisan conjecture \cite{LN90} recently settled by Braverman \cite{Bra09}. However this generalization was later falsified in \cite{Aar10}.} Although the usefulness of this oracle for the $\BQP$ vs. $\PH$ question still has to be elucidated, the problem was shown to have a huge separation of quantum and classical query complexities: it can be solved by a constant number of quantum queries, while it requires exponentially many queries by a classical machine, even if we give the classical machine the -- extremely powerful -- ability to \textit{postselect} on a given result of  the computation. This is the strongest separation of quantum and classical query complexities to date. It also implies oracles relative to which $\BQP \nsubseteq \BPP_{\text{path}}$ \footnote{Here $\BPP_{\text{path}}$ is defined as the class of problems which can be solved in polynomial time, with high probability, by a randomized classical computer which can postselect on given outcomes of the computation \cite{HHT97}.} and  $\BQP \nsubseteq \SZK$, which supersedes all previous oracle separations for $\BQP$.

Aaronson's problem, named $\FC$, is the following: We are given two boolean functions $f, g : \{ 0, 1 \}^n \rightarrow \{ -1, 1 \}$ with the promise that either 
\begin{itemize}
\item $f$ and $g$ are chosen uniformly at random, or 
\item for a vector $v \in \mathbb{R}^{2^n}$ with entries $v_x$ drawn independently from a normal distribution of mean 0 and variance 1, the functions are chosen as $f(x) = \text{sgn}(v_x)$ and $g(x) = \text{sgn}(\hat{v}_x)$ \footnote{The sign function is defined as: $\text{sgn}(x) = 1$ for $x \geq 0$, and $\text{sgn}(x) = -1$ otherwise.}. Here the vector $\hat{v}$ is the Fourier transform over $\mathbb{Z}_2^n$ of $v$ and is given 
by  
\begin{equation} \label{fouriertransform}
\hat{v}_x = \sum_{y \in \{0, 1\}^n} (-1)^{x.y}v_y. 
\end{equation} 
\end{itemize}
The task is to decide which is the case.
In words, we should determine if the two functions are not correlated at all or if one of them is well correlated with the Fourier transform of the other. 

The quantum algorithm proposed in \cite{Aar09} to solve the problem is particularly simple. One prepares the  uniform superposition over the computational basis, queries $f$, applies the quantum Fourier transform (QFT), queries $g$, and checks if the final state is again in a uniform superposition over the computational basis. If the functions are independent, then there is only an exponential small chance of getting the right outcome in the final measurement, while in  the case where they are correlated, this happens with constant probability.

Considering how well this problem fleshes out the superiority of quantum computation to classical, it is worthwhile to try to understand what exactly  gives its strength. For instance, what is the role played by the Fourier transform, both the the definition of the problem and in the quantum algorithm solving it? Can we replace it by some other transformation? One of the goals of this paper is to shed  light on these questions.       
  
From a broader perspective, we will be concerned with the following question, central to our understanding of the computational capabilities offered by quantum mechanics: What is  the set of quantum circuits which provide large quantum speed-ups? More precisely, for which quantum circuits can we construct black-box problems which are solved by the circuit with only a few queries to the black-box, but which require a large number of queries  for randomized classical computation? This question is in  a sense a converse to the well-studied problem  of characterizing the class of black-box functions allowing for significant quantum speed-ups (see e.g. \cite{BBC+01, AA09}). While the latter  deals with the determination of which computational problems are suited for quantum computing, the former contributes to the classification of which quantum algorithmic techniques are  useful for solving problems efficiently.   

For instance, all the early examples of quantum algorithms offering superpolynomial speed-ups \cite{Deu85, DJ92, BV97, Sim97} were based on the quantum Fourier transform and this led to the speculation that it could be the defining aspect of quantum computation behind quantum speed-ups. Subsequently, other black-box problems showing a quantum advantage were found having no relation to the QFT \cite{CCD+03, AJL06, HH08}, hence extending the scope of techniques for constructing quantum algorithms.

Of particular note in this context, and for this paper, is the work of Hallgren and Harrow \cite{HH08} on generalizations of Bernstein and Vazirani's $\RFS$ problem. The $\RFS$ classical--quantum separation is built in two steps: first one construct a black-box problem requiring a constant number of quantum queries, but $\Omega(n)$ classical queries. Then one uses recursion to boost the separation to a $n^{O(1)}$ quantum versus $n^{\log(n)}$ classical queries. The oracle problem in the first part is based on the Fourier transform and solved by the QFT. In \cite{HH08} it was shown that this problem could be modified to have almost \textit{any} quantum circuit (from a natural measure on circuits) in the place of the Fourier transform and still achieve the constant versus linear separation, as in the original formulation. Moreover, any such problem could also be boosted by recursion to provide a black-box problem with a superpolynomial quantum-to-classical gap in query complexity.    
              
\subsection{Our results}

In this paper we generalize Aaroson's $\FC$ problem \cite{Aar09} and show that the Fourier transform, both in the definition of the problem and in the quantum algorithm solving it, can be replaced by a large class of quantum circuits. These include both the Fourier transform over any (possibly non-abelian) finite group and \textit{almost any} sufficiently long quantum circuit from a natural distribution on the set of quantum circuits, which we discuss later on. We obtain exponential separations of quantum and postselected classical query complexities for all such circuits. 

\vspace{0.1 cm}
\noindent \textbf{Flat circuits imply exponential separation:} In more detail, we first introduce a simple measure of flatness, or dispersiveness, of a unitary $U$ on $n$ qubits, denoted $C(U)$ . It is defined as the minimal \textit{min-entropy}\footnote{For a probability distribution $p(x)$ we define its min-entropy as $h_{\text{min}}(p) := - \log  \max_{x} p(x)$} (over $j \in \{ 0, .., 2^{n}-1 \}$)  of the outcome probability distribution of a computational basis measurement applied to $U\ket{j}$. For $N := 2^{n}$,
\begin{equation}
C(U) :=  \min_{j \in [N]} h_{\text{min}} \left( \{ |\bra{0}U\ket{j}|^{2}, ..., |\bra{N-1}U\ket{j}|^{2} \}  \right),
\end{equation} 
with $[N] := \{ 0, ..., N - 1 \}$. It thus measures the worst-case dispersiveness of states obtained by applying $U$ to computational basis states. 

In section \ref{oracleproblem} we define, for a unitary $U$, the black-box problem U-$\CC$, a variant of $\FC$ in which the Fourier transform in the definition of the vector $\hat{v}$ (given by Eq. (\ref{fouriertransform})) is replaced by $U$. The problem is constructed so that a quantum computer can easily solve it with access to a few realizations of the unitary $U$, while it is classically hard for any $U$ with large $C(U)$.

In detail, on one hand we prove a lower bound of $2^{\Omega(C(U))}$ on the classical query complexity with postselection of U-$\CC$ (see section \ref{classicallowerbounds}). Following the ideas of \cite{Aar09}, we do so by showing that the discretized version of the random vector $(v, Uv)$ -- for a vector $v$ composed of independent elements $v_x$ each drawn from a normal distribution of mean 0 and variance 1 -- is $k^{O(1)}2^{\Omega(C(U))}$-almost $k$-wise independent (a fact which was shown to imply the previous exponential lower bound on the postselected classical query complexity \cite{Aar09}). 

On the other hand, on a quantum computer we can solve U-$\CC$ by the following simple modification of Aaaroson's algorithm: we prepare each qubit in the $\ket{+} := (\ket{0} + \ket{1})/\sqrt{2}$ state, forming the uniform superposition over the computational basis. Then we query the $f$ function, apply the circuit $U$, query the $g$ function, and measure each qubit in the Hadamard basis, accepting if all of them are found in the $\ket{+}$ state. Therefore we obtain:

\begin{theorem}  \label{maindesignals} 
For any circuit $U$ acting on $n$ qubits with $C(U) = \Omega(n)$, the problem $\U$-$\CC$ shows an exponential separation of quantum and postselected classical query complexities. 
\end{theorem}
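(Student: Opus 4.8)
The plan is to prove the two directions of the separation independently and then combine them through the hypothesis $C(U)=\Omega(n)$. For the quantum upper bound I would analyze the algorithm described above: starting from $\tfrac{1}{\sqrt N}\sum_x\ket{x}$, querying $f$, applying $U$, querying $g$, and projecting onto the uniform superposition, one finds that a single run accepts with amplitude $\tfrac1N\langle g,Uf\rangle=\tfrac1N\sum_{x,y}g(y)U_{yx}f(x)$, where $f,g$ are viewed as $\pm1$ vectors. For the classical side I would invoke the $2^{\Omega(C(U))}$ lower bound on the postselected classical query complexity proved in the later section, so that $C(U)=\Omega(n)$ gives an exponential $2^{\Omega(n)}$ classical cost against the constant quantum cost, which is precisely the claimed separation.

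The crux of the quantum upper bound is to separate the acceptance probability $|\tfrac1N\langle g,Uf\rangle|^2$ in the two cases. In the uniform case ($f,g$ uniformly random signs) a second-moment computation gives $\mathbb{E}\,|\tfrac1N\langle g,Uf\rangle|^2=\tfrac{1}{N^2}\sum_{x,y}|U_{yx}|^2=\tfrac1N$ by unitarity, so by Markov's inequality the acceptance probability is exponentially small except with vanishing probability. In the correlated case ($f=\text{sgn}(v)$, $g=\text{sgn}(Uv)$) I would use the rotation invariance of the standard Gaussian: since $v$ has i.i.d.\ entries and $U$ is unitary, $Uv$ is again an i.i.d.\ standard Gaussian vector. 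Decomposing $\text{sgn}(v)=\alpha v+r$ with $r\perp v$ and $\alpha=\|v\|_1/\|v\|_2^2$, the amplitude splits into a main term $\tfrac{\alpha}{N}\|Uv\|_1$ and a noise term $\tfrac1N\langle\text{sgn}(Uv),Ur\rangle$. Using $\|w\|_1/\|w\|_2\to\sqrt{2/\pi}\,\sqrt N$ for a Gaussian vector $w$, applied to both $v$ and $Uv$, the main term concentrates at $\tfrac2\pi$, so the acceptance probability is $\Omega(1)$ and a constant number of repetitions distinguishes the two cases with bounded error.

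The main obstacle on the quantum side is controlling the noise term $\tfrac1N\langle\text{sgn}(Uv),Ur\rangle$, for which a naive Cauchy--Schwarz bound gives a contribution of the same order as the main term and could in principle cancel it; instead I would bound its expectation (or second moment) over $v$, showing that the part of $U\text{sgn}(v)$ orthogonal to $Uv$ does not systematically align with $\text{sgn}(Uv)$ and is thus negligible next to $\tfrac2\pi N$. This is Aaronson's analysis of \FC\ carried over to an arbitrary unitary $U$, and the fact that $Uv$ remains Gaussian is exactly what lets it go through unchanged. The genuinely deep ingredient of the whole statement, however, is the classical lower bound, whose technical heart is showing that the discretized vector $(v,Uv)$ is $k^{O(1)}2^{\Omega(C(U))}$-almost $k$-wise independent; this is where the dispersiveness hypothesis $C(U)=\Omega(n)$ enters, and it is deferred to its own section.
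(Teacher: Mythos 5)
Your proposal follows the paper's proof essentially verbatim: the theorem is obtained by combining the constant-query quantum algorithm (whose correlated-versus-uniform analysis is exactly Aaronson's Theorem 9 and uses only the unitarity of $U$ --- the paper cites it rather than redoing the Gaussian decomposition you sketch) with the $2^{\Omega(C(U))}$ postselected classical lower bound proved via almost $k$-wise independence in the later section. The only slip is notational: the discretized distribution is $k^{O(1)}2^{-\Omega(C(U))}$-almost $k$-wise independent (the error parameter must be exponentially \emph{small}), a sign that the paper's own introduction also drops.
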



We then proceed by giving two classes of unitaries with $C(U) = \Omega(n)$. 

\begin{theorem} \label{classeswithlargeC}
\begin{list}{\quad}{}
\item[]
\item[] (i) Let $U_{\text{QFT}}(G)$ be the quantum Fourier transform over the finite group $G$. Then $C(U_{\text{QFT}}(G)) \geq \frac{1}{2}\log |G|$. 
\item[] (ii)  Given any $2^{-3tn}$-approximate unitary \textit{t}-design on $n$ qubits, all but a $2^{-(t(1 - \beta) - 2)n + 1}$ fraction of its elements have $C(U) \geq \beta n$.
\end{list}
\end{theorem}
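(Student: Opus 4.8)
The plan is to reduce both statements to a bound on the entry of $U$ of largest modulus. Since $h_{\text{min}}(\{p_k\}) = -\log\max_k p_k$, the definition of $C(U)$ simplifies to
\be
C(U) = \min_{j}\left(-\log \max_{k}|\bra{k}U\ket{j}|^{2}\right) = -\log \max_{j,k}|\bra{k}U\ket{j}|^{2},
\ee
so that $C(U)\ge c$ is equivalent to $|\bra{k}U\ket{j}|^{2}\le 2^{-c}$ for all $j,k$. Both parts thus become estimates of $\max_{j,k}|\bra{k}U\ket{j}|$.

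For part (i) I would invoke the explicit form of the Fourier transform over a finite group $G$. Labelling the output register by $\ket{\rho,a,b}$, where $\rho$ runs over the (unitary) irreducible representations and $a,b\in\{1,\dots,d_\rho\}$, one has $\bra{\rho,a,b}U_{\text{QFT}}(G)\ket{g} = \sqrt{d_\rho/|G|}\,\rho(g)_{ab}$, and $\sum_\rho d_\rho^{2}=|G|$ ensures this matrix is square and unitary. Because $\rho(g)$ is unitary, each entry obeys $|\rho(g)_{ab}|\le 1$, and because $d_\rho^{2}\le\sum_{\rho'}d_{\rho'}^{2}=|G|$ we get $d_\rho\le\sqrt{|G|}$. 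Hence
\be
|\bra{\rho,a,b}U_{\text{QFT}}(G)\ket{g}|^{2} = \frac{d_\rho}{|G|}\,|\rho(g)_{ab}|^{2}\le\frac{d_\rho}{|G|}\le\frac{1}{\sqrt{|G|}},
\ee
and taking $-\log$ of the largest such entry yields $C(U_{\text{QFT}}(G))\ge\tfrac12\log|G|$ (for abelian $G$ every $d_\rho=1$ and the inequality is an equality, $C=\log|G|$).

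For part (ii) I would run a first-moment argument directly on the design. Write $X_{jk}:=|\bra{k}U\ket{j}|^{2}$; the bad event $C(U)<\beta n$ is precisely $\{\exists\,j,k:\,X_{jk}>2^{-\beta n}\}$, so a union bound over the $N^{2}$ pairs together with Markov's inequality applied to the $t$-th power gives
\be
\Pr_{U}\left[C(U)<\beta n\right]\le\sum_{j,k}\Pr_{U}\left[X_{jk}^{t}>2^{-\beta n t}\right]\le N^{2}\,2^{\beta n t}\,\max_{j,k}\,\mathbb{E}_{U}\left[X_{jk}^{t}\right].
\ee
Now $X_{jk}^{t}=|\bra{k}U\ket{j}|^{2t}$ is a single matrix element of $U^{\ot t}\ot\overline{U}^{\ot t}$, so the defining closeness of the $t$-design forces $\mathbb{E}_{U}[X_{jk}^{t}]$ to lie within the design accuracy of its Haar value. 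That Haar value is the $t$-th moment of one squared amplitude of a random pure state, $t!/\big(N(N+1)\cdots(N+t-1)\big)\le t!/N^{t}$, whence $\mathbb{E}_{U}[X_{jk}^{t}]\le t!/N^{t}+2^{-3tn}$. Substituting $N=2^{n}$, the $2^{-3tn}$ contribution is negligible beside the leading term and the right-hand side is of order $t!\,2^{2n}\,2^{\beta n t}\,2^{-tn}=t!\,2^{-(t(1-\beta)-2)n}$, reproducing the exponent $-(t(1-\beta)-2)n$ of the claimed fraction.

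The main obstacle is the control of moments with only a $t$-design in hand. Because an approximate $t$-design reproduces only moments up to degree $t$, one cannot use the doubly-exponential Haar tail $(1-2^{-\beta n})^{N-1}$ for $\Pr_{U}[X_{jk}>2^{-\beta n}]$, and must be content with the polynomial-in-$N^{1-\beta}$ suppression that a single $t$-th moment provides; this is what fixes the exponent $t(1-\beta)$. The quantitative purpose of the accuracy $2^{-3tn}$ is to guarantee that replacing the Haar average by the design average in that one moment perturbs it by far less than the Haar moment $\sim N^{-t}$ itself, so that the Markov estimate survives. Making this precise requires tracking how the $t$-design error in the relevant operator norm bounds the error in an individual entry of $U^{\ot t}\ot\overline{U}^{\ot t}$, together with the $N^{2}$ union-bound factor (the source of the $+2n$ in the exponent); these counting and combinatorial prefactors are exactly what determine the constants $-2$ and $+1$ recorded in the stated fraction.
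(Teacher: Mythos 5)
Your proposal is correct and follows essentially the same route as the paper: part (i) is the identical argument via $\sum_\rho d_\rho^2=|G|$ and unitarity of the irrep matrices, and part (ii) is the same Markov-plus-union-bound estimate on $\mathbb{E}_U|\bra{k}U\ket{j}|^{2t}$, comparing to the Haar moment $\binom{N+t-1}{t}^{-1}\le t!/N^t$ computed via the symmetric subspace. The only point to tighten is the one you already flag: converting the $2\to 2$-norm design error $\varepsilon=2^{-3tn}$ into an error on a single balanced monomial costs a factor $d^{2t}$ (the paper's Lemma \ref{equilow}), so the additive perturbation to the moment is $2^{-tn}=N^{-t}$ rather than $2^{-3tn}$ --- comparable to, not negligible beside, the Haar term, but still yielding the claimed fraction up to the $O(t!)$ constant.
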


In particular, we find that for $2^{-9n}$-approximate unitary 3-designs, all but a $2^{-n/2 + 1}$ fraction of its element have $C(U) \geq n / 6$. We note that the result of the theorem does not appear to hold true for unitary 2-designs and thus we seem to have the first application of unitary $t$-designs for $t > 2$.

The proofs of both statements of Theorem \ref{classeswithlargeC} are elementary and are given in section \ref{familiesflat}. 

\vspace{0.1 cm}
\noindent \textbf{Random circuits are unitary 3-designs:} A unitary $t$-design is an ensemble of unitaries $\{ \mu(dU), U\}$, for a measure $\mu$ on the set of unitaries, such that the average (over $\mu$) of any $t$-degree polynomial on the entries of $U$ and their complex conjugates is equal to the average over the Haar measure. An approximate unitary $t$-design is a relaxed version of the previous definition, in which we only require that the averages are close to each other (see section \ref{approximateunitarydesigns} for a precise definition) \cite{DCEL09, GAE07}. 

In a series of papers \cite{ELL05, ODP07, DOP07, HL09, DJ10} it was established that polynomially long random quantum circuits constitute an approximate unitary 2-design. The random quantum circuit model used is the following: in each step a random pair of qubits is chosen and a gate from a universal set of gates, also chosen at random, is applied to them. Although there is evidence that random quantum circuits of polynomial lenght are unitary $t$-design for every $t = \text{poly}(n)$ \cite{AB08, BV10}, this has not been rigorously proved so far, even for the $3$-design case. 

Here we prove that random quantum circuits are indeed approximate unitary $3$-designs. We show it both for the random circuit model of the previous paragraph and for a different one, introduced in \cite{HP07} as a toy model for the evolution of black holes, which is more suited for the methods we employ. In this model, which we call \textit{local random quantum circuit model}, the qubits are arranged in a circle and in each step a random two-qubit gate is applied to two neighbouring qubits. 

\begin{theorem} \label{3design}
$5 n \log(1/\varepsilon)$-size local random quantum circuits form an $\varepsilon$-approximate unitary 3-design. 
\end{theorem}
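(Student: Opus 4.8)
The plan is to recast the design property as a question about the spectral gap of a frustration-free local Hamiltonian on a ring, and then import a many-body bound on that gap. First I would pass to the third-moment operator: for a measure $\nu$ on the unitary group write $M(\nu) := \mathbb{E}_{U\sim\nu}\big[U^{\otimes 3}\otimes \overline{U}^{\,\otimes 3}\big]$, acting on $(\cc^{2})^{\otimes 3n}\otimes(\cc^{2})^{\otimes 3n}$; by definition an ensemble is an $\varepsilon$-approximate $3$-design exactly when $M(\nu)$ is $\varepsilon$-close, in the measure fixed in section \ref{approximateunitarydesigns}, to the Haar operator $M_{\mathrm{H}}:=M(\text{Haar})$, which is the orthogonal projector onto the commutant of $U\mapsto U^{\otimes 3}$. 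Because one step of the local random circuit applies an independent Haar gate to a uniformly chosen neighbouring pair, and because $U^{\otimes 3}\otimes\overline U^{\,\otimes3}$ is multiplicative under composition, the moment operator of an $\ell$-gate circuit factorizes as $G^{\ell}$, where
\begin{equation}
G \;=\; \frac{1}{n}\sum_{i=1}^{n} P_{i,i+1}, \qquad P_{i,i+1}:=\mathbb{E}_{W\sim\text{Haar}(4)}\big[W^{\otimes3}\otimes\overline W^{\,\otimes3}\big]_{i,i+1},
\end{equation}
with indices taken mod $n$ on the ring. Each $P_{i,i+1}$ is an orthogonal projector supported on a neighbouring pair, so $G$ is positive with $\|G\|=1$, its top eigenspace is $\mathrm{Im}\,M_{\mathrm H}$, and $M_{\mathrm H}G=GM_{\mathrm H}=M_{\mathrm H}$; hence $\|G^{\ell}-M_{\mathrm H}\|=\lambda_{2}(G)^{\ell}$, where $\lambda_{2}(G)$ is the second largest eigenvalue of $G$.

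It remains to lower bound the spectral gap $1-\lambda_{2}(G)$. I would encode it in the frustration-free Hamiltonian
\begin{equation}
H \;=\; \sum_{i=1}^{n}\big(\id - P_{i,i+1}\big) \;=\; n\,(\id - G),
\end{equation}
whose terms are local projectors, whose ground energy is $0$, and whose ground space is precisely $\mathrm{Im}\,M_{\mathrm H}$, the $3$-design subspace. The spectral gap of $H$ equals $\Delta=n\,(1-\lambda_{2}(G))$, so $\lambda_{2}(G)=1-\Delta/n$ and $\|G^{\ell}-M_{\mathrm H}\|\le e^{-\ell\Delta/n}$. The theorem thus reduces to showing $H$ is gapped uniformly in $n$, i.e. $\Delta\ge\Delta_{0}$ for an absolute constant $\Delta_{0}>0$: choosing $\ell=5n\log(1/\varepsilon)$ then makes $\|G^{\ell}-M_{\mathrm H}\|$, and hence the design-approximation measure of section \ref{approximateunitarydesigns}, at most $\varepsilon$; the constant $5$ is fixed by the precise relation between these quantities together with the value of $\Delta_{0}$.

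The main obstacle is exactly this uniform lower bound on $\Delta$, and here I would invoke a many-body spectral-gap technique for $1$D frustration-free chains, such as Nachtergaele's martingale method (or an equivalent finite-size / detectability-lemma argument). The scheme is: (i) compute the local ground spaces of the $P_{i,i+1}$ and their mutual overlaps using Schur--Weyl duality for $U(2)^{\otimes3}$, where the relevant commutant is governed by the representation theory of $S_{3}$ and is low-dimensional; (ii) establish on a block of a constant number $m$ of neighbouring sites a finite, $n$-independent gap $\gamma_{m}>0$, which reduces to a fixed finite-dimensional computation; and (iii) propagate this block gap to the whole ring by verifying the martingale/overlap condition, namely that products of ground-space projectors of overlapping blocks have operator norm bounded away from $1$. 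Step (iii) is where the real difficulty lies: one must control how the local invariant subspaces on overlapping pairs fail to be mutually orthogonal and show this non-orthogonality stays strictly below the threshold the method demands, uniformly along the ring. Once that overlap bound is secured the martingale method yields $\Delta\ge\Delta_{0}$ with an explicit constant, closing the reduction above.
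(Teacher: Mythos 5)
Your strategy coincides with the paper's: both pass to the moment operator $G=\frac{1}{n}\sum_i P_{i,i+1}$, identify $\Vert G^{\ell}-M_{\mathrm H}\Vert=\lambda_2(G)^{\ell}$, encode the gap in the frustration-free Hamiltonian $H=n(\id-G)$, and reduce the theorem to an $n$-independent lower bound on $\Delta(H)$. The one substantive difference is the propagation step: where you invoke Nachtergaele's martingale method and correctly flag the overlap condition of your step (iii) as the real difficulty, the paper sidesteps that entirely with the simpler Knabe-type finite-size criterion of \cite{AFH09} (Lemma \ref{boundongapofromlocal}), namely $\Delta(H)\ge 2\min_i\Delta\left(H_{i,i+1}+H_{i+1,i+2}\right)-1$, whose proof is a short operator inequality of the form $H^2\ge\gamma H$. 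This trades your overlap estimate for a single quantitative requirement --- the gap of the two overlapping projectors on three qubits must exceed $1/2$ --- which the paper then verifies by exactly the finite-dimensional computation you defer in step (ii): using the Eggeling--Werner orthonormal basis of $S_3$-permutation operators it finds $\lambda_2(M_{t,3})=7/10$, hence $\Delta(H_{1,2}+H_{2,3})=3/5>1/2$ and $\Delta(H)\ge 1/5$, which is precisely where the constant $5$ in the circuit length comes from. So your plan is sound and would close along either route, but as written both of its quantitative pillars (the value of the local gap and the overlap/threshold check) remain to be computed; the paper's choice of the Knabe bound makes the second one essentially free once the first is done.
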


The proof of Theorem \ref{3design} is based on a reduction, first put forward by Brown and Viola \cite{BV10}, connecting the convergence rate of moments of the random quantum circuit to the spectral gap (the difference of the lowest and second lowest eigenvalues) of a quantum local Hamiltonian. Our main contribution is to show in section \ref{proof3design}  that we can obtain a lower bound on this spectral gap employing a technique from quantum many-body theory used  e.g. in \cite{AFH09, FNW92, Nac96, PVWC07}.  

In particular, using this technique we are able to reduce the problem of bounding the spectral gap of the random walk on $n$ qubits induced by the random circuit, to bounding the spectral gap of the same random walk, but now defined only on \textit{three} neighbouring qubits. Then it suffices to bound the convergence time of the second and third moments of the latter random walk in order to prove that the random circuit constitute a 3-design. We believe our approach is promising also for higher values of $t$ and might pave the way to a proof that random quantum circuits are approximate unitary $t$-designs for all $t = \text{poly}(n)$. We however leave such possibility as an open problem for future work.

Combining Theorems \ref{3design}  and \ref{classeswithlargeC} we obtain our main result that almost any polynomial quantum circuit is useful for exponential quantum speed-ups.

\begin{theorem}
For the distribution induced by the local random quantum circuit model, all but a $2^{-\Omega(n)}$ fraction of quantum circuits $U$ with more than $O(n^{2})$ gates are such that $\U$-$\CC$ shows an exponential gap in the quantum and the postselected classical query complexities.
\end{theorem}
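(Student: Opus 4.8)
The plan is to chain the three preceding theorems together under a single careful choice of parameters, so that the argument is essentially a matter of bookkeeping. First I would invoke Theorem \ref{3design} with the design accuracy fixed at $\varepsilon = 2^{-9n}$. Since $\log(1/\varepsilon) = 9n$, this guarantees that a local random quantum circuit of size $5n\log(1/\varepsilon) = 45n^2 = O(n^2)$ gates forms a $2^{-9n}$-approximate unitary $3$-design. The quadratic gate count is precisely the $O(n^2)$ threshold appearing in the statement, and I would argue that circuits drawn from the same distribution with \emph{more} than this many gates remain valid $2^{-9n}$-approximate $3$-designs: appending further random two-qubit layers corresponds to composing with an additional moment-operator contraction, which cannot increase the distance of the relevant moments from their Haar values. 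This monotonicity is what lets the conclusion hold for all circuits past the threshold rather than only those of exactly length $45n^2$.

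Next I would apply Theorem \ref{classeswithlargeC}(ii) with $t = 3$ and $\beta = 1/6$. The hypothesis there asks for a $2^{-3tn} = 2^{-9n}$-approximate unitary $3$-design, which is exactly what the previous step supplies; this matching of accuracies is the only genuine constraint tying the parameters together, and it is what forces the $O(n^2)$ gate count. The conclusion is that all but a fraction $2^{-(t(1-\beta)-2)n+1} = 2^{-(3\cdot\frac{5}{6}-2)n+1} = 2^{-n/2+1}$ of the elements $U$ of the design satisfy $C(U) \geq \beta n = n/6$. Because $2^{-n/2+1} = 2^{-\Omega(n)}$, this shows that all but an exponentially small fraction of circuits $U$ produced by the local random quantum circuit model have $C(U) = \Omega(n)$.

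Finally I would feed these flat circuits into Theorem \ref{maindesignals}: for any fixed $U$ on $n$ qubits with $C(U) = \Omega(n)$, the black-box problem $\U$-$\CC$ exhibits an exponential separation between quantum and postselected classical query complexities. Combining the last two steps yields that, for all but a $2^{-\Omega(n)}$ fraction of circuits generated by the local random quantum circuit model with more than $O(n^2)$ gates, $\U$-$\CC$ shows the claimed exponential gap, which is exactly the asserted statement.

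The combination itself is routine; I do not expect any of the three cited results to need modification. The only point requiring actual care is the interface between Theorem \ref{3design} and Theorem \ref{classeswithlargeC}(ii)—namely that the $2^{-9n}$ accuracy demanded by the $t=3$, $\beta=1/6$ instance of the latter is achievable at the stated gate count—and, as noted above, the verification that the $3$-design property persists for circuits longer than the $O(n^2)$ threshold rather than degrading. Neither is a deep obstacle, so the main work of the paper lies in establishing Theorems \ref{maindesignals}, \ref{classeswithlargeC}, and \ref{3design} individually, with this final theorem serving as their corollary.
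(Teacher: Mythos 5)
Your proposal is correct and follows essentially the same route as the paper, which treats this theorem as an immediate corollary of Theorems \ref{maindesignals}, \ref{classeswithlargeC}(ii), and \ref{3design} with exactly the parameters you chose ($\varepsilon = 2^{-9n}$, giving a $45n^2$-gate circuit, and $t=3$, $\beta = 1/6$, giving the $2^{-n/2+1}$ exceptional fraction). Your observation that the design property persists for longer circuits is also consistent with the paper's analysis, since $\Vert G_{\mu^{*k},t} - G_{\mu_H,t}\Vert_\infty = (\lambda_2(M_{t,n}))^k$ is monotonically decreasing in the circuit length $k$.
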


\vspace{0.1 cm}
\noindent \textbf{The role of $C(U)$ and classical efficient solution for sparse unitaries:} We have seen that dispersive unitaries $U$ with large $C(U)$ give an exponential speed-up in U-$\CC$. Is a large $C(U)$ always required for a speed-up? We present two results indicating that this is indeed the case.


First we show that with a modified notion of oracle access (we call it the \textit{independent query model}), in which a different independent realization of the random parameters of the oracle is chosen in each query, a linear $C(U)$ is necessary for an exponential speed-up. 

\begin{theorem} \label{independentquerymodeltheorem}
In the independent query model of oracle access, the randomized classical query complexity of $\U$-$\CC$ is equal to $2^{\Theta(C(U))}$.
\end{theorem}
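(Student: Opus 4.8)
The plan is to prove the two matching bounds separately, both hinging on a single structural observation about the independent query model: since a fresh, independent realization of the random vector $v$ is drawn at every query, the full transcript is, conditioned on the \YES{} or \NO{} branch, a product distribution over the queries. Consequently the only statistic that can ever distinguish the two cases is the correlation visible \emph{within} a single realization, and that correlation is controlled by the entries of $U$. The starting point is the elementary identity, immediate from the definition of $C(U)$, that $2^{-C(U)} = \max_{j}\max_i |\bra{i}U\ket{j}|^2 = \max_{i,j}|U_{ij}|^2$, so that $\max_{i,j}|U_{ij}| = 2^{-C(U)/2}$ is the largest single-query signal available.

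For the upper bound I would exhibit an explicit algorithm achieving $2^{O(C(U))}$ queries. The algorithm first computes a pair $(x^\ast,y^\ast)$ attaining $|U_{y^\ast x^\ast}| = 2^{-C(U)/2}$ (it knows $U$, which defines the problem). In each of $T$ independent rounds it queries $f(x^\ast)$ and $g(y^\ast)$ on that round's fresh realization and records the product $f(x^\ast)g(y^\ast)\in\{-1,+1\}$. In the \NO{} case this product has mean $0$; in the \YES{} case $v_{x^\ast}$ and $(Uv)_{y^\ast}$ are jointly Gaussian with correlation coefficient $U_{y^\ast x^\ast}$, so the Gaussian sign-correlation identity $\mathbb{E}[\text{sgn}(X)\text{sgn}(Y)] = \tfrac{2}{\pi}\arcsin(\rho)$ gives mean $\tfrac{2}{\pi}\arcsin(U_{y^\ast x^\ast}) = \Theta(2^{-C(U)/2})$. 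Distinguishing an empirical mean of order $2^{-C(U)/2}$ from $0$ with constant confidence requires, by Hoeffding, only $T = O\!\left(2^{C(U)}\right)$ rounds.

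For the lower bound I would argue information-theoretically. Because queries are independent, it suffices to bound the per-query distinguishing advantage and then tensorize. The outcome of a single query is the joint sign pattern of a Gaussian vector whose covariance matrix is the identity in the \NO{} case, and in the \YES{} case differs only in the off-diagonal ``cross'' block between the queried $f$- and $g$-positions, whose entries are the numbers $U_{y_i x_j}$, each of magnitude at most $2^{-C(U)/2}$. A second-order expansion of the Gaussian sign probabilities in these small entries shows that the per-query Kullback--Leibler (equivalently, chi-squared) divergence between the \YES{} and \NO{} laws is $O\!\left(2^{-C(U)}\right)$. Subadditivity of KL divergence over the independent queries then gives total divergence $O\!\left(T\,2^{-C(U)}\right)$, and Pinsker's inequality forces $T = \Omega\!\left(2^{C(U)}\right)$ before the total variation distance between the two transcripts can reach a constant, which is necessary for any bounded-error decision.

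The main obstacle is the per-query divergence estimate in the lower bound when one query is permitted to reveal several values from a single realization: one must bound the statistical distance between the joint sign distribution of a correlated Gaussian vector and that of an independent one \emph{uniformly over all choices of queried indices}, purely in terms of the maximal cross-correlation $2^{-C(U)/2}$. The quantitative heart of the argument is that the first-order contribution in the off-diagonal entries cancels, leaving a surviving term that is quadratic and hence $O(2^{-C(U)})$; making this rigorous (for instance through a Hellinger-distance bound on the signs of Gaussian vectors) while verifying that adaptivity cannot let the algorithm amplify the advantage beyond the product-measure tensorization is where the real work lies.
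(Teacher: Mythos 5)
Your upper bound coincides with the paper's own proof: the paper likewise fixes a pair $(i,j)$ attaining $|U_{i,j}|=2^{-C(U)/2}$, queries $f_r(i)$ and $g_r(j)$ on $O(\text{Re}(U_{i,j})^{-2}\log(1/\epsilon))$ fresh realizations, thresholds the empirical mean of $f_r(i)g_r(j)$, and applies a Chernoff bound. The only difference is cosmetic: you invoke the arcsine identity $\mathbb{E}[\text{sgn}(X)\text{sgn}(Y)]=\tfrac{2}{\pi}\arcsin\rho$ where the paper expands the bivariate Gaussian integral by hand to get $\mathbb{E}(f(i)g(j))=\Omega(\text{Re}(U_{i,j}))$. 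One point you should state more carefully: since $g$ is built from $\text{sgn}(\text{Re}((Uv)_{y^{\ast}}))$, the relevant correlation coefficient is controlled by $\text{Re}(U_{y^{\ast}x^{\ast}})$ rather than by $|U_{y^{\ast}x^{\ast}}|$, and these can differ badly (the maximal-modulus entry may be nearly purely imaginary); the paper patches this by passing to $\sqrt{-1}\,U$ in that case, and your write-up, which treats the complex number $U_{y^{\ast}x^{\ast}}$ itself as a correlation coefficient, needs the same fix.

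Your lower bound, by contrast, is content the paper does not supply: the printed proof of Theorem \ref{independentquerymodeltheorem} establishes only the $2^{O(C(U))}$ algorithm and never argues the matching $\Omega(2^{C(U)})$ direction (the postselected lower bound of Section \ref{classicallowerbounds} is proved for the standard single-realization oracle and does not transfer verbatim to the independent query model). Your KL-tensorization plan is sound, and the ``real work'' you flag is essentially already available in the paper: Proposition \ref{almostwiseindepdnent} gives, uniformly over the $k$ positions read from a single realization, multiplicative closeness $O(k^{3}2^{-C(U)/2})$ of the correlated sign pattern to uniform, hence per-realization chi-squared (and so KL) divergence $O(k^{6}2^{-C(U)})$ --- this is precisely your quadratic surviving term, since the divergence is quadratic in deviations that are individually first order in the off-diagonal covariances --- and additivity of KL over adaptively chosen but independent rounds plus Pinsker then gives $T=\Omega(2^{C(U)}/\poly(k))$. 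So your route is correct and in fact covers more of the stated theorem than the paper's own proof does.
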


Second we consider the circuit checking problem for \textit{approximately-sparse} $U$, defined as unitaries which can be approximated (in operator norm) by a sparse matrix with only polynomially many non-zero entries in each row and column. Then we show the following. 

\begin{theorem} \label{classsumulasparseU}
For approximately-sparse $U$ the randomized classical query complexity of $\U$-$\CC$ is polynomial. 
\end{theorem}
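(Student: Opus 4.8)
The plan is to give an explicit classical algorithm that exploits the sparse structure of $U$ to predict $g$ from a few local queries to $f$, and then detect the resulting correlation. Write $U = S + E$, where $S$ has at most $s = \poly(n)$ nonzero entries in every row and column and $\| E \| \le \varepsilon$ in operator norm. Since $U$ is unitary and $\|E^\dagger e_x\| \le \|E\| \le \varepsilon$, each row of $S$ has Euclidean norm $1 - O(\varepsilon)$; taking real parts if $U$ is not real orthogonal, we may treat the relevant quantities as real. For a row index $x$ let $S_x := \{ y : S_{xy} \ne 0\}$ denote its support, of size at most $s$.

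The algorithm is as follows. Using the known description of $S$, greedily select $m = \poly(n)$ row indices $x_1, \dots, x_m$ whose supports $S_{x_1}, \dots, S_{x_m}$ are pairwise disjoint; this is possible because each coordinate lies in at most $s$ supports (column sparsity), so each chosen row rules out at most $s^2$ others, leaving $\ge N/s^2 \gg m$ admissible rows. For each $i$ we query $g(x_i)$ and query $f(y)$ for all $y \in S_{x_i}$, costing $O(m s) = \poly(n)$ queries in total. We then form the predictor $p(x_i) := \text{sgn}\!\big( \sum_{y \in S_{x_i}} S_{x_i y} f(y) \big)$ and the empirical correlation $\widehat{C} := \frac{1}{m} \sum_{i=1}^m g(x_i)\, p(x_i)$, accepting the ``correlated'' hypothesis iff $\widehat C \ge c/2$ for the constant $c$ from the estimate below.

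Correctness rests on the following estimate, which I regard as the technical heart of the argument. Let $v$ be a standard Gaussian vector and, for a unit vector $a$, set $X := \sum_y a_y v_y$ and $Y := \sum_y a_y \text{sgn}(v_y)$; then there is a universal constant $c > 0$ with $\mathbb{E}\big[ \text{sgn}(X)\, \text{sgn}(Y) \big] \ge c$, uniformly in $a$. To prove this I would condition on the sign pattern $\sigma = \text{sgn}(v)$: then $Y = Y(\sigma)$ is fixed while $X = \sum_y a_y \sigma_y |v_y|$ has conditional mean $\mathbb{E}[X \mid \sigma] = \sqrt{2/\pi}\, Y(\sigma)$ and conditional variance $(1 - 2/\pi)\sum_y a_y^2 = 1 - 2/\pi$, independent of $\sigma$. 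Thus $\mathbb E[X\mid\sigma]$ has the same sign as $Y(\sigma)$, so $\text{sgn}(Y)\,\mathbb E[\text{sgn}(X)\mid\sigma] \ge 0$, and a tail bound on $X$ about its mean shows this quantity is bounded below by a positive function of $|Y(\sigma)|$. Averaging and using anti-concentration of $Y$ (since $\mathbb E[Y^2] = 1$ and $\mathbb E[Y^4] \le 3$, Paley--Zygmund gives $\Pr(|Y| \ge 1/2) \ge 3/16 = \Omega(1)$) yields the constant lower bound $c$. In the correlated case $g(x) = \text{sgn}((Uv)_x) = \text{sgn}((Sv)_x)$ for all but an $O(\sqrt\varepsilon)$ fraction of rows, because $\sum_x (Ev)_x^2 = \| E v\|^2 \le \varepsilon^2 \|v\|^2$ forces $(Ev)_x$ to be $o(1)$ for typical $x$ while $(Sv)_x$ has scale $1$ and bounded density near $0$; applying the estimate with $a$ the normalized row $x_i$ of $S$ then gives $\mathbb{E}[g(x_i)\, p(x_i)] \ge c - O(\sqrt\varepsilon) > 0$ for small enough $\varepsilon$. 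In the uniform case $g$ is independent of $f$ and $v$, so $\mathbb E[g(x_i)\, p(x_i)] = 0$.

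Finally, because the chosen supports are disjoint, the terms $g(x_i)\, p(x_i)$ depend (up to the negligible $E$-perturbation handled above) on disjoint blocks of coordinates of $v$ and are therefore independent and bounded in $[-1,1]$; Hoeffding's inequality then shows $\widehat C$ concentrates within $c/4$ of its mean once $m = O(c^{-2}\log(1/\delta)) = \poly(n)$, so the threshold test separates the two cases with success probability $1 - \delta$ using $\poly(n)$ queries. The step I expect to be the main obstacle is the uniform sign-correlation lower bound: the conditional-mean computation is clean, but controlling $\mathbb E[\text{sgn}(X)\mid\sigma]$ when $|Y(\sigma)|$ is small (where a crude Chebyshev bound is vacuous) requires a sharper tail estimate on $X$, and one must verify that these ingredients combine to a constant insensitive to the weight vector $a$, i.e.\ to the sparsity pattern and the magnitudes within each row of $S$.
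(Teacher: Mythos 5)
Your route is genuinely different from the paper's. The paper does not build a bespoke statistical test at all: it classically \emph{simulates the quantum algorithm}, invoking Van den Nest's result (Proposition~\ref{vandennest}) that $\bra{g}\tilde U\ket{f}$ can be estimated to inverse-polynomial accuracy with $\poly(n)$ queries whenever $\tilde U$ is sparse and $\ket{f},\ket{g}$ are computationally tractable, and then inherits correctness from Proposition~\ref{quantumalgorithm} together with the perturbation bound $\bigl| |\bra{g}U\ket{f}|^{2}-|\bra{g}\tilde U\ket{f}|^{2}\bigr|\le 2\Vert U-\tilde U\Vert_{\infty}$. Your direct estimator is an attractive alternative in principle --- the disjoint-support selection, which manufactures independent summands out of a \emph{single} oracle realization, is a genuinely nice idea that the paper's Theorem~\ref{independentquerymodeltheorem} only achieves by changing the query model --- but as written it has a real gap at precisely the step you flag.

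The gap is the uniform bound $\mathbb{E}[\text{sgn}(X)\,\text{sgn}(Y)]\ge c>0$. First, the inference ``$\mathbb{E}[X\mid\sigma]$ has the same sign as $Y(\sigma)$, hence $\text{sgn}(Y)\,\mathbb{E}[\text{sgn}(X)\mid\sigma]\ge 0$'' is not valid: the sign of a mean does not control the sign of $\Pr(X>0)-\Pr(X<0)$ (mean and median can lie on opposite sides of zero), and conditionally on $\sigma$ the variable $X$ is a weighted sum of \emph{signed half-normals}, which is not symmetric about its conditional mean; this positivity must be proved, not asserted. Second, the quantitative chain does not close: Chebyshev about the conditional mean gives $\Pr(\text{sgn}(X)\neq\text{sgn}(Y)\mid\sigma)\le(\pi/2-1)/|Y|^{2}$, which is below $1/2$ only when $|Y|^{2}>\pi-2\approx 1.14$, whereas Paley--Zygmund with $\mathbb{E}[Y^{2}]=1$ can only place constant mass on $|Y|\ge 1/2$ --- it cannot reach threshold $|Y|\gtrsim 1.07$ --- so the two ingredients you name yield no positive constant $c$. (The inequality is very plausibly true, e.g.\ $\mathbb{E}[\text{sgn}(X)\,Y]=\tfrac{2}{\pi}\sum_y a_y\arcsin(a_y)\ge \tfrac{2}{\pi}$ follows from Sheppard's formula, but passing from this to $\mathbb{E}[\text{sgn}(X)\,\text{sgn}(Y)]$ still needs an argument.) A secondary but genuine issue: for complex $U$ the quantity that determines $g(x)$ is $\sum_y \text{Re}(U_{xy})v_y$, and $\Vert\text{Re}(U_{x\cdot})\Vert$ can be arbitrarily small or zero for \emph{every} row (take $U=iV$ with $V$ real orthogonal, where $g\equiv 1$ in the correlated case), so ``taking real parts'' does not reduce to the unit-norm real case and your statistic can have vanishing bias; such rows must be detected and handled separately. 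The paper's simulation route sidesteps all of these difficulties.
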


We prove Theorem \ref{classsumulasparseU} by showing how a recent result of Van den Nest \cite{VdN09} on the classical simulability of certain quantum states and operations implies that the quantum algorithm for  U-$\CC$ with a sparse $U$ can be efficiently simulated with only polynomial many classical queries to $f$ and $g$. 

\subsection{Related Work}

This paper has a similar flavor to Hallgren and Harrow's work on the $\RFS$ problem \cite{HH08}. The idea of considering the dispersiveness of quantum circuits as a resource for oracle speed-ups also first appeared in \cite{HH08}, where a different, but related, notion of dispersive circuits was proposed and a constant versus linear separation in query complexity was shown for all such dispersive circuits; in section \ref{flatcir} we discuss it in more detail and show that our definition of a dispersive circuit is somewhat more demanding than theirs (although not completely comparable). In \cite{HH08} it was shown that both the Fourier transform over any finite group and almost any sufficiently long quantum circuit are dispersive. Implicit in their work is also the statement that most elements of an approximate unitary 2-design are dispersive. Although their definition of dispersiveness is weaker than ours and therefore broader, the separations we obtain are much stronger. While we get an exponential separation of quantum and postselected classical query complexities, they get a superpolynomial versus polynomial separation of quantum and classical query complexities, and only by using recursion (which itself can be seen as the responsible for the superpolynomial speed-up). 
 
There has been a series of work \cite{DCEL09, GAE07, RS09, HL09, Low09} on unitary $t$-designs (and on the closely related quantum expanders \cite{BST10, Has07, BST07, GE08, Har08, HH09}) and on their connection to random quantum circuits \cite{DCEL09, GAE07, ELL05, ODP07, DOP07, HL09, DJ10, AB08, BV10}. An important problem in this area is to derive efficient constructions on a quantum computer of approximate unitary $t$-designs. While there are several efficient constructions for $2$-designs \cite{DCEL09, BST10, GE08, Har08}, there is only a single one (based on the QFT) for unitary $t$-design on $n$ qubits with $t > 2$ (going up to $t = \Omega(n / \log(n))$) \cite{HL09b}. Our proof that random quantum circuits constitute a $3$-design gives an alternative efficient construction for the $t = 3$ case. 

Recently Brown and Viola \cite{BV10} proposed an interesting approach to the problem of random quantum circuits as unitary $t$-designs, based on mapping the convergence time of moments of the random circuit to the spectral gap of a mean-field quantum Hamiltonian. Conditioned on an unproven, but reasonable, conjecture about the low-lying eigenstates of the Hamiltonian, they showed that random quantum circuits of linear length are $t$-designs for every fixed $t$ and sufficiently large $n$. Our approach also starts by a reduction of the problem to lower bounding the spectral gap of quantum Hamiltonians. However, in our case, we find a local quantum Hamiltonian, consisting of nearest-neighbor terms only. We are also able to rigorously lower bound such spectral gap for $t \leq 3$, therefore obtaining a complete proof in this case.

After the completion of our work, we learned about a recent paper by Fefferman and Umans \cite{FU10}, in which the problem U-$\CC$ is also considered. Their focus is to study the usefulness of this problem in constructing an oracle separation of $\BQP$ and $\PH$, by relating such possibility to a conjecture \cite{BSW03} about the capacity of the Nisan-Wigderson pseudorandom generator \cite{NW94} to fool $AC_0$. To this aim only unitaries of a very special structure are consired in U-$\CC$. Our approach has the advantage that we can show an exponential gap of quantum and postselected-classical query complexities for a generic polynomial quantum circuit (a task not considered in \cite{FU10}), but has the drawback that we fail to give evidence that there are circuits providing a separation of $\BQP$ to $\PH$.

\section{Preliminaries}

\subsection{The oracle problem and its quantum solution} \label{oracleproblem}

Given a unitary $U \in \mathbb{U}(N)$ (with $\mathbb{U}(N)$ the group of $N \times N$ unitary matrices) we consider the following extension of the $\FC$ problem \cite{Aar09}:

\begin{itemize}
\item[] $\U$-$\CC$: We are given access to two black-box functions\footnote{In this work we consider the phase-oracle model for quantum queries. Namely, let $U_f$ be the oracle unitary of $f$ and $\ket{x}$ a computational basis state. Then $U_f \ket{x} = (-1)^{f(x)} \ket{x}$ (and likewise for $g$).} $f, g : \{ 0, 1 \}^{n} \rightarrow \{ 1, -1 \}$ with the promise that either
\item (independent and random) $f$ and $g$ are chosen independently and uniformly at random, with each of their entries drawn from a random unbiased coin, or 
\item ($U$-correlated) for a vector $v \in \mathbb{C}^{N}$ with entries $v_x$ drawn independently from a complex normal distribution $v_x = v_{x, r} + i v_{x, i}$, with $v_{x, r}$ and $v_{x, i}$ normal real variables of mean 0 and variance 1, the functions are chosen as $f(x) = \text{sgn}(\text{Re}(v_x))$ and $g(x) = \text{sgn}(\text{Re}((Uv)_x))$ \footnote{here $\text{Re}(z)$ is the real part of the complex number $z$.}. The vector $Uv$ is given explicitely by  
\begin{equation} 
(Uv)_x = \sum_{y \in [N]} U_{xy}  v_y. 
\end{equation} 
The problem is to decide which is the case.
\end{itemize}

Consider the following quantum algorithm for solving $\U$-$\CC$, where $U$ acts on $n$ qubits:

\begin{tabular}{|l|}
\hline
\textbf{Quantum Algorithm for $\U$-$\CC$:} \\
 (i) Prepare each of the $n$ qubits in the $\ket{+}$ state. \\
 (ii) Query the $f$ oracle. \\
(iii) Apply the unitary $U$. \\
(iv) Query the $g$ oracle.  \\
 (v) Measure each qubit in the Hadamard basis $\{ \ket{+}, \ket{-} \}$ and accept if all qubits are in the $\ket{+}$ state.  \\
\hline
\end{tabular}

\vspace{0.3 cm}

Let
\begin{equation} \label{f}
\ket{f} := \frac{1}{2^{n/2}}\sum_{x \in \{0, 1\}^{n}} (-1)^{f(x)} \ket{x}
\end{equation}
and
\begin{equation} \label{g}
\ket{g} := \frac{1}{2^{n/2}} \sum_{x \in \{0, 1\}^{n}} (-1)^{g(x)} \ket{x}
\end{equation}
Then it follows that the acceptance probability of the algorithm is given by
\begin{equation}
p_U(f, g) :=  | \bra{g}U\ket{f} |^{2}.
\end{equation}
The next proposition shows the quantum algorithm above can distinguish the cases of correlated (by the action of $\U$) and independent $f$ and $g$.
\begin{proposition} \label{quantumalgorithm}
If $f$ and $g$ are drawn independently and uniformily at random,
\begin{equation}
\mathbb{E} \left(p_U(f, g) \right) = \frac{1}{2^{n}} 
\end{equation}
while if $f$ and $g$ are $\U$-correlated,
\begin{equation}
\mathbb{E} \left(p_U(f, g) \right) \geq 0.07
\end{equation}
\end{proposition}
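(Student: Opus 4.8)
The plan is to settle the two cases by direct moment computations: unitarity handles the independent case, and a Gaussian sign-correlation identity handles the $U$-correlated case. Throughout I write the amplitudes of $\ket{f}$ and $\ket{g}$ as the $\pm 1$ signs directly, i.e. $\ket{f}=N^{-1/2}\sum_x f(x)\ket{x}$ with $f(x)=\text{sgn}(v_x)$ and $\ket{g}=N^{-1/2}\sum_x g(x)\ket{x}$ with $g(x)=\text{sgn}(\text{Re}((Uv)_x))$, which is the content intended by Eqs.~(\ref{f})--(\ref{g}).

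For the independent case I would expand
\[ p_U(f,g)=|\bra g U\ket f|^2=\frac{1}{N^2}\sum_{x,y,x',y'} g(x)g(x')\,U_{xy}\,\overline{U_{x'y'}}\,f(y)f(y'). \]
Taking the expectation over independent unbiased signs gives $\mathbb{E}[f(y)f(y')]=\delta_{yy'}$ and $\mathbb{E}[g(x)g(x')]=\delta_{xx'}$, so only the diagonal terms $x=x',\,y=y'$ survive and $\mathbb{E}[p_U]=\frac{1}{N^2}\sum_{x,y}|U_{xy}|^2=\frac{1}{N}$, the last step being unitarity ($\sum_{x,y}|U_{xy}|^2=N$). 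This proves the first claim.

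For the $U$-correlated case, let $r^{(x)}$ be the real vector with entries $r^{(x)}_y=\text{Re}(U_{xy})$, so $g(x)=\text{sgn}(\langle r^{(x)},v\rangle)$. Since $|\bra g U\ket f|^2\ge(\text{Re}\,\bra g U\ket f)^2$ and $\mathbb{E}[X^2]\ge(\mathbb{E}X)^2$, it suffices to lower bound the first moment $\mathbb{E}[\text{Re}\,\bra g U\ket f]$ by a constant. I would write
\[ \mathbb{E}[\text{Re}\,\bra g U\ket f]=\frac{1}{N}\sum_{x,y}\text{Re}(U_{xy})\,\mathbb{E}\big[\text{sgn}(\langle r^{(x)},v\rangle)\,\text{sgn}(v_y)\big] \]
and evaluate each expectation with the Gaussian sign-correlation (Grothendieck) identity $\mathbb{E}[\text{sgn}(X)\text{sgn}(Y)]=\frac{2}{\pi}\arcsin(\rho)$ for mean-zero jointly Gaussian $X,Y$ of correlation $\rho$. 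Here $X=\langle r^{(x)},v\rangle$ and $Y=v_y$, so $\rho=\text{Re}(U_{xy})/\|r^{(x)}\|$. Applying the elementary inequality $t\arcsin t\ge t^2$ term by term collapses the $y$-sum, since $\sum_y \text{Re}(U_{xy})\arcsin(\text{Re}(U_{xy})/\|r^{(x)}\|)\ge\|r^{(x)}\|$, whence $\mathbb{E}[\text{Re}\,\bra g U\ket f]\ge\frac{2}{\pi N}\sum_x\|r^{(x)}\|=\frac{2}{\pi N}\sum_x\sqrt{\sum_y\text{Re}(U_{xy})^2}$.

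The remaining and genuinely delicate step is to bound $\frac{1}{N}\sum_x\|r^{(x)}\|$ below by a constant; this quantity measures exactly how much signal survives passing to $\text{Re}(Uv)$, and it is where the main obstacle lies. For real $U$ (as in the $\mathbb{Z}_2^n$ Fourier case) one has $\|r^{(x)}\|=1$ and the bound is $2/\pi$, giving $\mathbb{E}[p_U]\ge(2/\pi)^2\approx0.40$; more generally, using $\|r^{(x)}\|\le 1$ gives $\sum_x\|r^{(x)}\|\ge\sum_{x,y}\text{Re}(U_{xy})^2=\tfrac{N}{2}+\tfrac12\text{Re}\,\Tr(UU^{T})$, which is $\approx N/2$ whenever $\Tr(UU^{T})$ is negligible — the generic situation, and in particular that of the random circuits of interest — yielding $\mathbb{E}[p_U]\ge(1/\pi)^2\approx0.10>0.07$. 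I expect this inequality to be the crux: it is robust for real and for generic $U$, but breaks for adversarial choices (e.g. purely imaginary $U$, where $g\equiv 1$ and $\mathbb{E}[p_U]=1/N$), so the argument must invoke a lower bound on the real-part row norms $\|r^{(x)}\|$ valid for the class of $U$ under consideration in order to reach the constant $0.07$.
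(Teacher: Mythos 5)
Your treatment of the independent case is correct and complete: expanding $|\bra{g}U\ket{f}|^{2}$ and using $\mathbb{E}[f(y)f(y')]=\delta_{yy'}$, $\mathbb{E}[g(x)g(x')]=\delta_{xx'}$ together with $\sum_{x,y}|U_{xy}|^{2}=N$ gives exactly $2^{-n}$. For the correlated case, your route --- reduce to the first moment via $|z|^{2}\geq(\mathrm{Re}\,z)^{2}$ and Jensen, evaluate $\mathbb{E}[\mathrm{sgn}(X)\mathrm{sgn}(Y)]=\frac{2}{\pi}\arcsin\rho$, and apply $\arcsin t\geq t$ --- is precisely Aaronson's argument for the Hadamard transform, which is all the paper offers anyway: its proof of this proposition is a deferral to \cite{Aar09} together with the assertion that ``the only property of the quantum Fourier transform used is the fact that it is a unitary.''

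The obstacle you isolate at the end is genuine, and it is a defect of the paper's proof rather than of your reconstruction. The argument needs $\frac{1}{N}\sum_{x}\bigl(\sum_{y}\mathrm{Re}(U_{xy})^{2}\bigr)^{1/2}=\Omega(1)$, which is a property strictly beyond unitarity; it holds for Aaronson's $U$ only because the Hadamard matrix is real. Your counterexample is valid and can be made concrete and dispersive: $U=i\,H^{\otimes n}$ is unitary with $C(U)=n$, yet $\mathrm{Re}((Uv)_{x})=0$ for all $x$, so $g\equiv 1$ and $p_{U}(f,g)=|\braket{0^{n}}{f}|^{2}=1/N$ deterministically, contradicting the claimed bound $0.07$. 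So the proposition as stated for arbitrary $U\in\mathbb{U}(N)$ is false, and the paper's claim that Aaronson's reasoning ``can be applied here without any modification'' is incorrect. The statement is salvageable for the unitaries the paper actually needs --- your identity $\sum_{x,y}\mathrm{Re}(U_{xy})^{2}=\frac{N}{2}+\frac{1}{2}\mathrm{Re}\,\Tr(UU^{T})$ shows the bound holds whenever $|\Tr(UU^{T})|=o(N)$, which is true with high probability for Haar-random $U$ and is controlled by the fourth moments a $2$-design already reproduces --- but that hypothesis must be added to the proposition and verified for the circuit families in Theorems I and II; neither step appears in the paper. Your diagnosis of exactly where the argument uses more than unitarity is the most valuable part of the write-up.
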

\begin{proof}
In \cite{Aar09} Aaronson proved the proposition for the case in which $U$ is the quantum Fourier transform over $\mathbb{Z}_2^{n}$, which appeared as Theorem 9 in \cite{Aar09}. A closer inspection at his proof shows that the only property of the quantum Fourier used is the fact that it is a unitary. Therefore the reasoning of \cite{Aar09} can be applied here without any modification. We omit reproducing the full argument and instead refer the reader to \cite{Aar09}.
\end{proof}

\subsection{Dispersing Circuits} \label{flatcir}

We now define a notion of dispersive, or flat, circuits which will play a central role in this work. Let $h_{\text{min}}$ be the min-entropy defined as
\begin{equation}
h_{\text{min}}(p) = - \log  \max_{x} p(x). 
\end{equation}
 
\begin{definition} \label{dispersivenesshere}
For a unitary $U$ we define:
\begin{eqnarray}
C(U) &:=&  \min_{j \in [N]} h_{\text{min}} \left( \{ |\bra{0}U\ket{j}|^{2}, ..., |\bra{N-1}U\ket{j}|^{2} \}  \right) \\ \nonumber \\ &=& - \log \left( \max_{i, j \in [N]} |U_{i, j}|^{2} \right).   
\end{eqnarray} 
\end{definition}

It is interesting to compare this definition of a dispersive circuit with Harrow and Hallgren's \cite{HH08}:
\begin{definition} \label{HHdis}
(HH-dispersiveness \cite{HH08}) A unitary $U \in \mathbb{U}(2^{n})$ is $(\alpha, \beta)$-dispersing if there exists a set $A \subseteq \{ 0, 1\}^{n}$ with $|A|\geq 2^{\alpha n}$ and
\begin{equation}
\sum_{x \in \{0, 1\}^{n}} |\bra{a}U\ket{x}| \geq \beta 2^{\frac{n}{2}}
\end{equation}
for all $a \in A$. 
\end{definition}

Thus while Def. \ref{dispersivenesshere} looks at the infinity norm of the outcome probability distribution of measurements in the computational basis, maximized over all initial computational basis states, HH-dispersiveness (Def. \ref{HHdis}) is concerned with the 1-norm of such probability distribution, and the maximum taken only over a constant-size fraction of all the computational basis states. 

In \cite{HH08} it was shown that $(\alpha, \beta)-$dispersing circuits, for $\alpha, \beta = O(1)$, are useful for speed-ups in the variant of the $\RFS$ problem there defined. If we allow for lower values of $\beta$ than a constant (but still requiring the circuit to be fairly flat), then a dispersive unitary according to Def. \ref{dispersivenesshere} is also HH-dispersive. Indeed, a simple calculation (which we omit here) shows that if $U$ is such that $C(U) \geq \gamma n$, then $U$ is also $(1, 2^{(\gamma - 1) n / 2})$-dispersive according to Def. \ref {HHdis}.

\subsection{Approximate Unitary Designs} \label{approximateunitarydesigns}

We start defining a norm on quantum operations which we will use to compare two superoperators. For $X \in {\cal B}(\mathbb{C}^{d})$ we define the $p$-Schatten norms $\Vert X \Vert_p := \tr(|X|^{p})^{\frac{1}{p}}$. Then for a superoperator $\Lambda : {\cal B}(\mathbb{C}^{d}) \rightarrow {\cal B}(\mathbb{C}^{d'})$ we define the $p \rightarrow q$ induced Schatten norm as
\begin{equation}
\Vert {\Lambda}(X) \Vert_{p \rightarrow q}  := \sup_{X \neq 0} \frac{\Vert {\Lambda}(X) \Vert_p}{\Vert X \Vert_q}.
\end{equation}
Finally, the diamond norm is defined as the CB-completion of the $1 \rightarrow 1$ norm,
\begin{equation}
\Vert \Lambda \Vert_{\diamond} := \sup_d \Vert \id_{d} \otimes \Lambda \Vert_{1 \rightarrow 1}.
\end{equation}

There are several different definitions of $\varepsilon$-approximate unitary $t$-designs. A convenient one for us is the following.
\begin{definition} \label{tdesgndef}
\textbf{(Approximate unitary $t$-design)} Let $\{ \mu, U \}$ be an ensemble of unitary operators from $\mathbb{U}(d)$. Define
\begin{equation}
{\cal G}_{\mu, t}(\rho) = \int_{\mathbb{U}(d)} U^{\otimes t}\rho (U^{\cal y})^{\otimes t} \mu(dU)
\end{equation}
and
\begin{equation}
{\cal G}_{H, t}(\rho) = \int_{\mathbb{U}(d)} U^{\otimes t}\rho (U^{\cal y})^{\otimes t} \mu_H(dU),
\end{equation}
where $\mu_{H}$ is the Haar measure. Then the ensemble is a $\varepsilon$-approximate unitary $t$-design if 
\begin{equation} \label{designcondepsilon}
\Vert {\cal G}_{\mu, t} - {\cal G}_{H, t} \Vert_{2 \rightarrow 2} \leq \varepsilon.
\end{equation}
\end{definition}

The following Lemma from \cite{Low10} shows that the previous notion of an approximate unitary design implies two others, which will also be relevant in this work.
\begin{lemma} \label{equilow}
(Lemma 2.2.14 of \cite{Low10}). Let $\{ \mu, U \}$ be an $\varepsilon$-approximate unitary $t$-design on $\mathbb{U}(d)$ according to Def. \ref{tdesgndef}. Then 
\begin{itemize}
\item[] (a) For $ {\cal G}_{\mu, t}$ and $ {\cal G}_{H, t}$ given by Def. \ref{tdesgndef},
\begin{equation}
\Vert {\cal G}_{\mu, t} - {\cal G}_{H, t} \Vert_{\diamond} \leq d^{t} \varepsilon.
\end{equation}
\item[] (b) For every balanced monomial $M = U_{p_1q_1}...U_{p_kq_k}U_{r_1s_1}^{*}...U_{r_ks_k}^{*}$ of degree $k \leq t$,
\begin{equation}
| \mathbb{E}_{U \sim \mu} \left(  M(U) \right) - \mathbb{E}_{U \sim \mu_H} \left(  M(U) \right)| \leq d^{2t} \varepsilon.
\end{equation}
\end{itemize}
\end{lemma}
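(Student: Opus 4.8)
The plan is to convert the hypothesis $\Vert\Delta\Vert_{2\to2}\leq\varepsilon$, where I write $\Delta := {\cal G}_{\mu,t} - {\cal G}_{H,t}$ for the difference superoperator acting on ${\cal B}(\mathbb{C}^D)$ with $D := d^t$, into each of the two claimed estimates by elementary norm manipulations. For part (a) I would use two standard facts about the $2\to2$ norm: that it is the ordinary operator norm of $\Delta$ regarded as a linear map on the Hilbert space $({\cal B}(\mathbb{C}^D),\langle\cdot,\cdot\rangle_{HS})$, so that it is unchanged under tensoring with an identity channel, $\Vert\id_{d'}\otimes\Delta\Vert_{2\to2} = \Vert\Delta\Vert_{2\to2}$; and that the diamond norm is already attained at ancilla dimension equal to the input dimension $D$. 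Fixing such an ancilla, every input $Y$ on $\mathbb{C}^D\otimes\mathbb{C}^D$ has output supported on a space of dimension $D^2$, so chaining the Schatten conversions $\Vert\cdot\Vert_1\leq D\Vert\cdot\Vert_2$ on the output with $\Vert Y\Vert_2\leq\Vert Y\Vert_1$ on the input around the $2\to2$ bound gives
\[
\Vert(\id_D\otimes\Delta)(Y)\Vert_1 \leq D\,\Vert(\id_D\otimes\Delta)(Y)\Vert_2 \leq D\,\Vert\Delta\Vert_{2\to2}\,\Vert Y\Vert_2 \leq D\,\varepsilon\,\Vert Y\Vert_1,
\]
whence $\Vert\Delta\Vert_\diamond\leq D\varepsilon = d^t\varepsilon$.

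For part (b) the starting observation is that the matrix elements of ${\cal G}_{\mu,t}$ in the operator basis $\{\ket a\bra b\}$ are precisely the balanced moments of degree $t$: for multi-indices $a,b,c,e\in[d]^t$ one has $\bra c{\cal G}_{\mu,t}(\ket a\bra b)\ket e = \mathbb{E}_{U\sim\mu}[\prod_i U_{c_ia_i}\prod_i U^*_{e_ib_i}]$, and every balanced degree-$t$ monomial arises this way. Applying the Cauchy-Schwarz inequality for the Hilbert-Schmidt inner product, together with $\Vert\ket a\bra b\Vert_2 = \Vert\ket c\bra e\Vert_2 = 1$, shows that each such matrix element of $\Delta$ is bounded in modulus by $\Vert\Delta\Vert_{2\to2}\leq\varepsilon$. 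Hence every degree-$t$ balanced monomial already satisfies the claimed inequality with room to spare.

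The only remaining step is to descend from degree $t$ to an arbitrary degree $k\leq t$, and this is where I expect the (mild) technical work to lie. I would use unitarity in the pointwise form $\sum_m U_{mj}U^*_{mj} = 1$, which holds for every $U\in\mathbb{U}(d)$ and therefore under both $\mu$ and $\mu_H$. Inserting $t-k$ such trivial factors, with arbitrary fixed column indices, rewrites a balanced degree-$k$ monomial $M$ as a sum of $d^{t-k}$ balanced degree-$t$ monomials, $\mathbb{E}_\mu[M] = \sum_{\vec m}\mathbb{E}_\mu[M'_{\vec m}]$, and identically for $\mu_H$; applying the degree-$t$ estimate termwise and summing over the $d^{t-k}$ index choices yields $|\mathbb{E}_\mu[M] - \mathbb{E}_{\mu_H}[M]|\leq d^{t-k}\varepsilon\leq d^{2t}\varepsilon$, which is the stated bound (indeed a sharper one). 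The main obstacle is thus not analytic but combinatorial bookkeeping: verifying that the padding identity applies identically to both measures and that the count of degree-$t$ terms is exactly $d^{t-k}$; all the genuine content sits in the tensor-stability of the $2\to2$ norm and the elementary Schatten inequalities used above.
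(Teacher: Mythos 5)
Your proposal is correct, but note that the paper does not actually prove this statement: it is imported verbatim as Lemma 2.2.14 of Low's thesis \cite{Low10}, so there is no in-paper argument to compare against. Your self-contained derivation is essentially the standard one and all the steps check out. For (a), the chain $\Vert\cdot\Vert_1\leq\sqrt{\dim}\,\Vert\cdot\Vert_2$ on the $D^2$-dimensional output, tensor-stability of the $2\to2$ norm (it is just the operator norm on the Hilbert--Schmidt Hilbert space, hence multiplicative under $\id\otimes(\cdot)$), and Kitaev's fact that the diamond norm is attained at ancilla dimension $D$ together give exactly $\Vert\Delta\Vert_\diamond\leq D\varepsilon=d^t\varepsilon$. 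For (b), the identification of matrix elements $\bra{c}{\cal G}_{\mu,t}(\ket{a}\bra{b})\ket{e}$ with balanced degree-$t$ moments is right, Cauchy--Schwarz in the Hilbert--Schmidt inner product bounds each such element of $\Delta$ by $\Vert\Delta\Vert_{2\to2}\leq\varepsilon$, and the padding trick $\sum_m U_{mj}U^*_{mj}=1$ (valid pointwise for every unitary, hence under both $\mu$ and $\mu_H$) correctly lifts degree $k$ to degree $t$ at the cost of a factor $d^{t-k}$. Your resulting bound $d^{t-k}\varepsilon$ is in fact sharper than the stated $d^{2t}\varepsilon$, which is harmless since the lemma is only used as an upper bound (e.g.\ in Lemma \ref{lemmatseidng}). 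The only cosmetic caveat is that the supremum in the definition of $\Vert\cdot\Vert_{2\to2}$ ranges over all of ${\cal B}$, not just Hermitian or positive inputs, which is what your Cauchy--Schwarz step on $\ket{a}\bra{b}$ implicitly requires; this is consistent with Def.\ \ref{tdesgndef} as written, so no gap arises.
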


\section{Families of Flat Unitaries} \label{familiesflat}

In this section we prove Theorem \ref{classeswithlargeC} showing two examples of families of unitaries which are highly dispersing.

\noindent\textbf{Quantum Fourier Transforms:} Let $G$ be a finite group with irreducible unitary representations $\{  V_{\lambda} \}_{\lambda \in \widehat{G}}$, where $r_{\lambda}(g)$ is the unitary matrix representation of the group element $g \in G$ in the irrep $V_{\lambda}$ and $\widehat{G}$ labels all inequivalent irreps of $G$. Let also $\{ \ket{g} \}_{g \in G}$ be an orthogonal basis for $\mathbb{C}^{|G|}$. The quantum Fourier transform over $G$ is given by 
\begin{equation} \label{QFToverG}
U_{\text{QFT}}(G) = \sqrt{\frac{\dim V_{\lambda}}{|G|}} \sum_{g \in G} \sum_{\lambda \in \widehat{G}} \sum_{i, j=1}^{\dim V_{\lambda}} r_{\lambda}(g)_{ij} \ket{\lambda, i, j}\bra{g}.
\end{equation}

We now prove the first part of Theorem \ref{classeswithlargeC} .

\begin{proof}
(Theorem \ref{classeswithlargeC} part (i)) The statement is a simple application of the following basic relation, valid for any finite group \cite{Bur93}:
\begin{equation} \label{eqbasicgrepteheoru}
\sum_{\lambda \in \widehat{G}} \dim(V_{\lambda})^{2} = |G|.
\end{equation}
Indeed, Eq. (\ref{eqbasicgrepteheoru}) implies $\dim V_{\lambda} \leq |G|^{\frac{1}{2}}$ for every $\lambda \in \widehat{G}$ and thus
\begin{equation}
|\bra{\lambda, i, j} U_{\text{QFT}}(G)  \ket{g}|^{2} = \frac{\dim V_{\lambda}}{|G|} |r_{\lambda}(g)_{ij}|^{2} \leq |G|^{-\frac{1}{2}},
\end{equation}
which implies $C(U_{\text{QFT}}(G)) \geq \frac{\log |G|}{2}$. 
\end{proof}
\vspace{0.4 cm}
\noindent\textbf{Unitary 3-designs:}

\begin{lemma} \label{lemmatseidng}
For every $\varepsilon$-approximate $t$-design $\{ \mu(dU), U \}$ on $\mathbb{U}(d)$, 
\begin{equation}
\Pr_{U \sim \mu} \left( C(U) \leq \nu  \right) \leq d^{2}2^{\nu t}\left(d^{-t}t! + d^{2t}\varepsilon \right).
\end{equation}
\end{lemma}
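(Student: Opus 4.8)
The plan is to translate the event $\{C(U) \le \nu\}$ into a statement about a single large matrix entry, then combine a union bound with a high-moment Markov inequality, and finally invoke the approximate-design hypothesis (Lemma \ref{equilow}(b)) to reduce everything to a Haar-measure moment computation. By the second expression in Definition \ref{dispersivenesshere} we have $C(U) = -\log\left( \max_{i,j} |U_{i,j}|^{2} \right)$, so the event $C(U) \le \nu$ is precisely the event that $|U_{i,j}|^{2} \ge 2^{-\nu}$ for some pair $(i,j) \in [d] \times [d]$. Taking a union bound over the $d^{2}$ pairs, it therefore suffices to bound $\Pr_{U \sim \mu}\!\left(|U_{i,j}|^{2} \ge 2^{-\nu}\right)$ for a fixed pair and multiply the result by $d^{2}$.

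For a fixed pair I would apply Markov's inequality to the nonnegative random variable $|U_{i,j}|^{2t}$:
\[
\Pr_{U \sim \mu}\!\left(|U_{i,j}|^{2} \ge 2^{-\nu}\right) = \Pr_{U \sim \mu}\!\left(|U_{i,j}|^{2t} \ge 2^{-\nu t}\right) \le 2^{\nu t}\, \mathbb{E}_{U \sim \mu}\!\left( |U_{i,j}|^{2t} \right).
\]
The key observation is that $|U_{i,j}|^{2t} = U_{i,j}^{\,t}\,(U_{i,j}^{*})^{t}$ is a \emph{balanced} monomial of degree exactly $t$ (all the $U$-factors carry row index $i$ and column index $j$, and likewise for the $U^{*}$-factors), so Lemma \ref{equilow}(b) applies with $k=t$ and yields
\[
\mathbb{E}_{U \sim \mu}\!\left( |U_{i,j}|^{2t} \right) \le \mathbb{E}_{U \sim \mu_{H}}\!\left( |U_{i,j}|^{2t} \right) + d^{2t}\varepsilon.
\]

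It then remains to evaluate the Haar moment. Here I would use that for a Haar-random $U$ the $j$-th column is a uniformly distributed unit vector in $\mathbb{C}^{d}$, so $|U_{i,j}|^{2}$ follows a $\mathrm{Beta}(1,d-1)$ distribution, giving $\mathbb{E}_{\mu_{H}}\!\left(|U_{i,j}|^{2t}\right) = t!\,(d-1)!/(d+t-1)! \le t!\,d^{-t}$, where the final inequality is the crude bound $(d-1)!/(d+t-1)! = \left[d(d+1)\cdots(d+t-1)\right]^{-1} \le d^{-t}$. Substituting, $\mathbb{E}_{U \sim \mu}\!\left( |U_{i,j}|^{2t} \right) \le d^{-t}t! + d^{2t}\varepsilon$, and combining the Markov estimate with the union bound over the $d^{2}$ entries produces exactly the claimed inequality.

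The only mildly delicate ingredient is the Haar moment itself; the union bound, Markov's inequality, and the reduction to the Haar measure are all routine once one recognizes that $|U_{i,j}|^{2t}$ is a balanced monomial of degree exactly $t$, which is precisely what makes the $t$-design hypothesis bite and explains the appearance of $2^{\nu t}$ and the $t!$ term. I would justify the value $\int |U_{i,j}|^{2t}\, \mu_{H}(dU) = t!\,(d-1)!/(d+t-1)!$ either from the $\mathrm{Beta}(1,d-1)$ fact above or, equivalently, from the Weingarten formula for monomials of the form $U_{ij}^{\,t}(U_{ij}^{*})^{t}$; no sharper estimate than $\le t!\,d^{-t}$ is needed.
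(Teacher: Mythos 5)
Your proof is correct and follows essentially the same route as the paper: a union bound over the $d^{2}$ entries, Markov's inequality applied to $|U_{i,j}|^{2t}$, the approximate-design hypothesis (Lemma \ref{equilow}(b)) to pass to the Haar measure, and the Haar moment $\mathbb{E}_{\mu_H}\left(|U_{i,j}|^{2t}\right)=\binom{d+t-1}{t}^{-1}\leq t!\,d^{-t}$. The only (immaterial) difference is that you evaluate this moment via the $\mathrm{Beta}(1,d-1)$ law of $|U_{i,j}|^{2}$, whereas the paper obtains it from the twirling identity $\int U^{\otimes t}\ket{j}\bra{j}^{\otimes t}(U^{\dagger})^{\otimes t}\mu_H(dU)=\binom{d+t-1}{t}^{-1}P_{\mathrm{sym},t}$; both give the same value and the same final bound.
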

\begin{proof}
By Markov's inequality
\begin{eqnarray} \label{markov}
\Pr_{U \sim \mu} \left( |\bra{i}U\ket{j}|^{2} \geq \lambda  \right) &=& \Pr_{U \sim \mu} \left( |\bra{i}U\ket{j}|^{2t} \geq \lambda^{t}  \right) \nonumber \\ &\leq & \frac{ \mathbb{E}_{U \sim \mu} \left(  |\bra{i}U\ket{j}|^{2t} \right) }{\lambda^{t}}.
\end{eqnarray}
Since $\{ \mu(dU), U \}$ is an $\varepsilon$-approximate unitary $t$-design, Lemma \ref{equilow} gives
\begin{equation} \label{bounddesign}
\mathbb{E}_{U \sim \mu} \left(  |\bra{i}U\ket{j}|^{2t} \right) \leq \mathbb{E}_{U \sim \mu_H} \left(  |\bra{i}U\ket{j}|^{2t} \right)  + d^{2t} \varepsilon.
\end{equation}

We have
\begin{eqnarray} \label{averageHaar}
\mathbb{E}_{U \sim \mu_H} \left(  |\bra{i}U\ket{j}|^{2t} \right) &=& \mathbb{E}_{U \sim \mu_H} \left(  (\bra{i}U\ket{j}\bra{j}U^{\cal y}\ket{i})^{t} \right) \nonumber \\ &=& \tr \left( \int_{U} \mu_{H}(dU) U^{\otimes t} \ket{j}\bra{j}^{\otimes t} (U^{\cal y})^{\otimes t} \left( \ket{i}\bra{i} \right)^{\otimes t}  \right) \nonumber \\ &=& \binom{d+t-1}{t}^{-1} \tr \left(  P_{\text{sym}, t}  \left( \ket{i}\bra{i} \right)^{\otimes t}  \right) =  \binom{d+t-1}{t}^{-1},
\end{eqnarray}
for $P_{\text{sym}, t}$ the projector onto the $ \binom{d+t-1}{t}$-dimensional symmetric subspace of $({\cal C}^{d})^{\otimes t}$.

Then, from Eqs. (\ref{markov}), (\ref{bounddesign}), (\ref{averageHaar}) and the union bound,
\begin{equation}
\Pr_{U \sim \mu} \left( \max_{i, j \in [d]} |\bra{i}U\ket{j}|^{2} \geq \lambda  \right) \leq  \frac{d^{2}}{\lambda^{t}}  \left( \binom{d+t-1}{t}^{-1} + d^{2t} \varepsilon \right).
\end{equation}
Now we set $\lambda = 2^{- \nu}$, use the bound 
\begin{equation}
\binom{d+t-1}{t} = \frac{(d + t -1)...(d+1)d}{t!} \geq \frac{d^{t}}{t!},
\end{equation}
and we are done.
\end{proof}

\begin{proof} (Theorem \ref{classeswithlargeC} part (ii))

Let $\{ \mu(dU), U \}$ be a $2^{-3tn}$-approximate unitary $t$-design on $\mathbb{U}(2^{n})$. Then applying Lemma \ref{lemmatseidng} with $\nu = \beta n$, $\Pr_{U \sim \mu} \left( C(U) \leq \beta n  \right) \leq 2^{2n} 2^{\beta n t} 2^{-t n + 1}(1 + t!)$.
\end{proof}

\section{Classical lower bounds} \label{classicallowerbounds}

In this section we prove an exponential lower bound on the postselected classical query complexity of $\U$-$\CC$ for dispersive circuits. Following \cite{Aar09}, our strategy will be to show that the distribution in the $\U$-$\CC$ problem in the case of $\U$-correlated strings is approximately $k$-wise independent. Then the result follows from the following proposition from \cite{Aar09}, relating this property to bounds on the postselected query complexity of distinguishing such distribution from the uniform one:

\begin{proposition} \label{connectingalmostwiseindtolowerbounds}
(Lemma 20 of \cite{Aar09}) Suppose a probability distribution ${\cal D}$ over oracle strings is $\delta$-almost $k$-wise independent. Then no bounded-error postselected classical machine running in less than $k$ steps can distinguish ${\cal D}$ from the uniform distribution with bias larger than $2 \delta$. 
\end{proposition}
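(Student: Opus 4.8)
The plan is to show that a postselecting classical machine making fewer than $k$ oracle queries cannot ``see'' enough of the oracle string to exploit any difference between ${\cal D}$ and the uniform (independent-coin) distribution, because $\delta$-almost $k$-wise independence guarantees that every outcome determined by fewer than $k$ coordinates has essentially the same \emph{multiplicative} probability under both distributions. First I would fix the machine's internal randomness $r$. For fixed $r$ the computation is a decision tree of depth at most $k-1$ over the oracle bits, with each leaf carrying a postselection bit and an acceptance bit. Writing $\text{post}(r,O)$ and $\text{acc}(r,O)$ for these bits on oracle $O$, I would express each as a sum over the \emph{disjoint} root-to-leaf paths of the tree, where each path fixes an assignment to at most $k-1$ coordinates of $O$.

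The key step is then a path-by-path comparison. Since the uniform distribution makes any assignment to $d \le k-1$ coordinates have probability exactly $2^{-d}$, while $\delta$-almost $k$-wise independence forces the corresponding probability under ${\cal D}$ to lie in $(1\pm\delta)2^{-d}$, summing over the disjoint paths of the tree yields
\be
\Pr_{O \sim {\cal D}}[\,\text{post}(r,O)=1\,] \in (1\pm\delta)\,\Pr_{O \sim U}[\,\text{post}(r,O)=1\,],
\ee
and the identical enclosure for the joint outcome $\{\text{acc}=1 \wedge \text{post}=1\}$. Averaging over $r$ preserves the multiplicative enclosure, so writing $A_X := \Pr_{O\sim X, r}[\text{acc}\wedge\text{post}]$ and $B_X := \Pr_{O\sim X, r}[\text{post}]$ for $X \in \{{\cal D}, U\}$, I obtain $A_{\cal D} = (1+\epsilon_A)A_U$ and $B_{\cal D} = (1+\epsilon_B)B_U$ with $|\epsilon_A|, |\epsilon_B| \le \delta$.

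Finally I would bound the distinguishing bias, i.e.\ the difference of the two conditional acceptance probabilities $q_X := A_X / B_X$ (the chance the machine accepts given that postselection succeeds, in experiment $X$). Using $\text{acc}\wedge\text{post}\Rightarrow\text{post}$, so that $A_X \le B_X$ and $q_U \le 1$, a one-line computation gives
\be
q_{\cal D} - q_U = q_U\,\frac{\epsilon_A - \epsilon_B}{1+\epsilon_B}, \qquad |q_{\cal D} - q_U| \le \frac{2\delta}{1-\delta}\,q_U \le \frac{2\delta}{1-\delta},
\ee
which is at most $2\delta$ up to the $O(\delta^2)$ correction.

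The main obstacle, and the whole reason the multiplicative form of almost $k$-wise independence is the right hypothesis, is the division by the postselection probability: merely additive closeness $|A_{\cal D}-A_U|,\,|B_{\cal D}-B_U| \le \delta$ would be useless, since an exponentially small $B_X$ could amplify a tiny additive discrepancy into a constant bias. Multiplicative closeness is precisely what survives the division, propagating through both numerator and denominator so that the ratio stays within a $(1\pm\delta)$-type factor no matter how small $B_X$ is. The second point requiring care is that $\text{post}(r,O)$, as a function of $O$, generally depends on far more than $k-1$ coordinates (the union over all branches of the tree); the decomposition into disjoint depth-$(k-1)$ paths is exactly what lets me apply the $k$-wise guarantee to this low-depth but high-junta function.
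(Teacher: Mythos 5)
The paper does not actually prove this proposition: it is imported verbatim as Lemma 20 of Aaronson's paper \cite{Aar09}, so there is no in-paper argument to compare against. Your proof is correct and is essentially Aaronson's original one --- the decomposition of the postselected machine (for fixed internal randomness) into disjoint depth-$(k-1)$ decision-tree paths, the observation that multiplicative closeness on low-degree terms is exactly what survives division by a possibly exponentially small postselection probability, and the final ratio computation. Two minor points: (i) the hypothesis controls $k$-terms, and you invoke it for $d$-terms with $d\le k-1$; this is fine because a $d$-term is a disjoint union of $2^{k-d}$ $k$-terms, but the step deserves a sentence; (ii) your final bound is $2\delta/(1-\delta)$ rather than $2\delta$ --- harmless for the application (where $\delta$ is exponentially small), and in fact the stated $2\delta$ follows from the same computation if, in the case $q_{\cal D}\ge q_U$, you factor out $q_{\cal D}$ instead of $q_U$, writing $q_{\cal D}-q_U=q_{\cal D}(\epsilon_A-\epsilon_B)/(1+\epsilon_A)\le 2\delta/(1+\delta)$, and symmetrically in the other case.
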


\begin{proof} (Theorem \ref{maindesignals})
Propositions \ref{almostwiseindepdnent} and \ref{connectingalmostwiseindtolowerbounds} give a lower bound of $2^{C(U)/7}$ on the classical query complexity with postselection of U-$\CC$. Together with the $O(1)$ queries quantum algorithm for the problem from section \ref{oracleproblem} implies Theorem \ref{maindesignals}.
\end{proof}

For a string $\{ x_1, ..., x_M \} \in \{  -1, 1\}^{M}$ we call a term of the form $\frac{1 \pm x_i}{2}$ a literal and define a $k$-term as a product of $k$-literals, which equals 1 if all the literals are 1 and 0 otherwise. Then an approximate $k$-wise independent distribution is defined as follows. 

\begin{definition}
A distribution ${\cal D }$ over $\{ -1, 1\}^{M}$ is $\varepsilon$-almost $k$-wise independent if for every $k$-term $C$,
\begin{equation}
\frac{1 - \varepsilon}{2^{k}} \leq \Pr_{\cal D}(C) \leq \frac{1 + \epsilon}{2^{k}}.
\end{equation}
\end{definition}
In words, ${\cal D}$ is $\varepsilon$-almost $k$-wise independent if the probability of every $k$-term is $\varepsilon$-multiplicatively close to its value on the uniform distribution (which is simply $2^{-k}$).

Consider the vector $\omega_{v, U} \in \{ -1, 1\}^{2N}$ given by
\begin{equation}
\omega_{v, U} := ( \text{sgn}(v_1), ..., \text{sgn}(v_N), \text{sgn}(\text{Re}(Uv)_1), ..., \text{sgn}(\text{Re}(Uv)_N))
\end{equation}
and let ${\cal D}_{U}$ be the distribution over $\omega_{v, U}$ when the vector $v := (v_1, ..., v_N)$ is composed of independent entries $v_k$, each drawn from a complex normal distribution $v_{k} = v_{k, r} + i v_{k, i}$ with $v_{k, r}, v_{k, i}$ real normal variables of mean 0 and variance 1. Then we have

\begin{proposition} \label{almostwiseindepdnent}
${\cal D}_U$ is $(6k^{3}2^{-C(U)/2})$-almost $k$-wise independent. 
\end{proposition}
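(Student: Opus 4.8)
The plan is to bound, for every $k$-term $C$, the quantity $|\Pr_{{\cal D}_U}(C) - 2^{-k}|$ and to show it is at most $\varepsilon\,2^{-k}$ with $\varepsilon = 6k^3 2^{-C(U)/2}$, since this is exactly the multiplicative closeness demanded by the definition of $\varepsilon$-almost $k$-wise independence. A $k$-term singles out $k$ coordinates of $\omega_{v,U}$ together with a target sign pattern, so $\Pr_{{\cal D}_U}(C)$ is precisely the probability that $k$ linear forms in $v$ realize prescribed signs, i.e.\ an orthant probability. The forms involved are $v_x$ for the selected $f$-coordinates and $\text{Re}((Uv)_x) = \sum_y \text{Re}(U_{xy})v_y$ for the selected $g$-coordinates; because $v$ is a centered standard Gaussian vector, these are centered and jointly Gaussian, and the whole task reduces to comparing their orthant probabilities to those of $k$ independent Gaussians, which are exactly $2^{-k}$.

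First I would expand $\Pr_{{\cal D}_U}(C)$ multilinearly as $2^{-k}\sum_{S}(\prod_{i\in S}\epsilon_i)\,\mathbb{E}[\prod_{i\in S}\text{sgn}(L_i)]$, the sum ranging over subsets $S$ of the chosen coordinates with $L_i$ the associated linear form. The symmetry $v\mapsto -v$ of the Gaussian kills every $S$ of odd size, and independence of the $v_x$ together with $\mathbb{E}\,\text{sgn}(v_x)=0$ kills every $S$ lying entirely within the $f$-coordinates. What survives are even subsets touching the $g$-coordinates, which I would control by a Gaussian interpolation (Price/Slepian) argument: deforming the correlation matrix along the segment from the identity to the true one and integrating, one bounds the deviation of each sign-moment from its independent value by a sum over pairs of pairwise correlations $|\rho_{ij}|$, using Sheppard's identity $\mathbb{E}[\text{sgn}(L_i)\text{sgn}(L_j)] = \tfrac{2}{\pi}\arcsin \rho_{ij}$ together with $|\tfrac{2}{\pi}\arcsin \rho|\le|\rho|$ for the base case.

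It then remains to bound the pairwise correlations. The cross ($f$–$g$) correlations are immediate: the covariance of $v_x$ with $\text{Re}((Uv)_{x'})$ is $\text{Re}(U_{x'x})$, of magnitude at most $\max_{i,j}|U_{ij}| = 2^{-C(U)/2}$ by the very definition of $C(U)$. The delicate part, and what I expect to be the main obstacle, is the block of correlations among the $g$-coordinates themselves: one has $\text{Cov}(\text{Re}((Uv)_x),\text{Re}((Uv)_{x'})) = \tfrac12\text{Re}((UU^{T})_{xx'})$ and variances $\sigma_x^2 = \tfrac12(1 + \text{Re}(\sum_y U_{xy}^2))$, so these are governed by $UU^{T}$ rather than by $U^{\dagger}U = \mathbb{I}$, and hence are not controlled by the size of the entries of $U$ alone. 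For real orthogonal $U$ (e.g.\ the Hadamard transform of the original \FC{} problem) this block is exactly the identity and the difficulty evaporates; for general flat $U$ one must argue separately that a large $C(U)$ forces both a uniform lower bound on the $\sigma_x$ (needed even to normalize the cross-correlations to $2^{-C(U)/2}/\sigma_{x'}$) and small off-diagonal entries of $UU^{T}$, so that every within-$g$ correlation is again $O(2^{-C(U)/2})$. This is the technical heart of the proposition.

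Finally, granting that all $\binom{k}{2}$ pairwise correlations are $O(2^{-C(U)/2})$, I would sum the interpolation estimate. The leading contribution comes from the single-pair ($|S|=2$) terms and is $\le 2^{-k}\binom{k}{2}\,O(2^{-C(U)/2})$, while the higher even-order sign moments, carrying additional powers of $2^{-C(U)/2}$, contribute geometrically smaller corrections. A careful accounting of these contributions together with the conditional orthant factors generated by the interpolation yields the claimed multiplicative error $\varepsilon = 6k^3 2^{-C(U)/2}$. Feeding this bound into Proposition \ref{connectingalmostwiseindtolowerbounds} then produces the desired $2^{\Omega(C(U))}$ postselected classical query lower bound for $\U$-$\CC$.
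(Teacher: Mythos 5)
Your overall strategy---reduce $\Pr_{{\cal D}_U}(C)$ to a Gaussian orthant probability for the $k$ linear forms $v_{s(i)}$ and $\mathrm{Re}((Uv)_{r(j)})$, and control it by showing the covariance matrix is close to the identity---is the same as the paper's. The execution differs: the paper does not expand multilinearly or invoke Slepian/Price interpolation and Sheppard's formula. It writes the joint Gaussian density with covariance $\Sigma=\id+(\text{off-diagonal block }Q)$, notes $\Vert\Sigma-\id\Vert_\infty\le k^2\max_{i,j}|U_{ij}|\le k^2 2^{-C(U)/2}$, sandwiches $\Sigma^{-1}$ between $(1\pm k^2 2^{-C(U)/2})^{-1}\id$, and compares the orthant integral to the product case directly; the elementary bound $((1+a)/(1-a))^k\le 1+6ka$ then produces the constant $6k^3 2^{-C(U)/2}$. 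Your interpolation machinery is a legitimate alternative for turning small pairwise correlations into a bound on the orthant probability, but it is not how the stated constant arises.

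The genuine gap is exactly the step you flagged as ``the technical heart'' and then left open: you never bound the within-$g$ correlations, and the claim you would need---that large $C(U)$ forces the off-diagonal entries of $UU^{T}$ to be small---is false. Concretely, $\mathrm{Cov}\bigl(\mathrm{Re}((Uv)_x),\mathrm{Re}((Uv)_{x'})\bigr)=\tfrac12\delta_{xx'}+\tfrac12\mathrm{Re}((UU^{T})_{xx'})$, and for $U$ the QFT over $\mathbb{Z}_{2^n}$ (which has maximal $C(U)=n$) one has $(Uv)_{-x}=\overline{(Uv)_x}$ for real $v$, so $g(x)=g(-x)$ deterministically and ${\cal D}_U$ fails even almost $2$-wise independence. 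So your proof cannot be completed in the generality of the statement; it closes only when the relation matrix $UU^{T}-\id$ is itself small (e.g.\ real orthogonal $U$ such as the Hadamard transform of the original $\FC$ problem, or generic $U$ from a design, for which one would have to add a separate probabilistic bound on $\max_{x\ne x'}|(UU^{T})_{xx'}|$). Be aware that the paper's own proof does not resolve this either: it asserts that $y$ is a \emph{circularly symmetric} complex Gaussian and uses the density $\pi^{-k}|\Sigma|^{-k}\exp(-y^{\dagger}\Sigma^{-1}y)$ with $\Sigma_{ij}=\mathbb{E}(y_i^{*}y_j)$, whose $g$--$g$ block is the identity because $UU^{\dagger}=\id$; but circular symmetry requires the pseudo-covariance $\mathbb{E}(y_iy_j)$ to vanish, and that is precisely the $UU^{T}$ object you identified. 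In short, you correctly located the weak point, but you neither closed it nor observed that, as stated for arbitrary unitaries, it cannot be closed.
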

\begin{proof}
Define
\begin{equation}
z_j = 
\begin{cases}
v_{s(j)} & \text{if} \hspace{0.2 cm} 0 \leq j \leq m \\
(Uv)_{r(j)} & \text{if} \hspace{0.2 cm} m < j \leq k.
\end{cases}
\end{equation}
for injective functions $s : [m] \rightarrow [N]$ and $r : [k - m] \rightarrow [N]$. Consider the following probability 
\begin{equation} \label{probinde}
P := \Pr \left( \text{sgn}(\text{Re}(z_1)) = a_1, \text{sgn}(\text{Re}(z_2)) = a_2, ..., \text{sgn}(\text{Re}(z_k)) = a_k \right),
\end{equation}
for a tuple $a := (a_1, ..., a_k) \in \{ -1, 1 \}^{k}$. In the remainder of the proof we show that the probability in Eq. (\ref{probinde}) is $(6k^{3}2^{-C(U)/2})$-multiplicatively close to $2^{-k}$ for every choice of the tuple $a$, functions $s, r$, and integer $m \leq k$, which readily implies the statement of the proposition.

First we note that the probability of Eq. (\ref{probinde}) is equal to
\begin{equation} \
P = \Pr \left(  a_1 \text{Re}(z_1) \geq 0, a_2 \text{Re}(z_1) \geq 0, ..., a_k \text{Re}(z_k) \geq 0  \right)
\end{equation}
and that $y := (a_1 z_1, ..., a_k z_k)$ is a (circular symmetric complex) multivariable normal distribution with mean zero. That is, there is a matrix $M \in \mathbb{C}^{k \times N}$ such that $y  := Mv$, where $v \in \mathbb{R}^{N}$ is a vector of independent normal variables of mean 0 and variance 1. 

It is a standard fact of multivariate normal distributions that they are are completely specified by the mean vector $\mu$ and the covariance matrix $\Sigma$ of the distribution, i.e. the probability density function of $y$ is given  by
\begin{equation}
\frac{1}{\pi^{k}|\Sigma|^{k}}  \text{exp} \left( - y^{\cal y}\Sigma^{-1}y  + \mu^{\cal y} y \right).
\end{equation}
In our case $\mu = 0$, while the covariance matrix $\Sigma_{i,j} = \mathbb{E} \left( y_{i}^{*}y_j \right)$ is given by
\begin{equation}
\Sigma = \left(   
\begin{array}{cc}
\id_{m} & Q \\
Q^{\cal y} & \id_{k-m}
\end{array}
\right),
\end{equation}
with
\begin{equation}
Q_{i, j} = a_i a_j\mathbb{E} \left( v_{s(i)}  (Uv)_{r(j)}    \right) = a_i a_j U_{s(i)r(j)}.
\end{equation}
Thus $\Vert \Sigma - \id \Vert_{\infty}  \leq k^{2} \max_{i, j \in [N]} |U_{i, j}| \leq k^{2} 2^{-\frac{1}{2}C(U)}$, which implies 
\begin{equation}
(1 - k^{2} 2^{-\frac{1}{2}C(U)}) \id   \leq  \Sigma \leq (1 + k^{2} 2^{-\frac{1}{2}C(U)}) \id
\end{equation}
and
\begin{equation} \label{sigmaiverse}
 (1 + k^{2} 2^{-\frac{1}{2}C(U)})^{-1} \id   \leq  \Sigma^{-1} \leq (1 - k^{2} 2^{-\frac{1}{2}C(U)})^{-1} \id.
\end{equation}
For $k$ much smaller than $2^{C(U)}$ we thus see that the covariance matrix $\Sigma$ is close to the identity, which means that the distribution over $y$ is close to the uniform. In the rest of the proof we make this observation quantitative.

We have
\begin{eqnarray} \label{integrals}
P &=& \Pr \left(  \text{Re}(y_1) \geq 0, \text{Re}(y_2) \geq 0, ..., \text{Re}(y_k) \geq 0  \right) \nonumber \\ &=& |\Sigma|^{- k} \pi^{- k} \int_{\text{Re}(y_1) \geq 0, ..., \text{Re}(y_k) \geq 0} \text{exp} \left( -  y^{\cal y}\Sigma^{-1}y  \right) dy_1 ... dy_k  \nonumber \\ &\leq& |\Sigma|^{- k} \pi^{- k} \int_{\text{Re}(y_1) \geq 0, ..., \text{Re}(y_k) \geq 0} \text{exp} \left( - (1 + k^{2} 2^{-\frac{1}{2}C(U)})^{-1} y^{\cal y}y  \right) dy_1 ... dy_k\nonumber \\  &\leq &  \left( \frac{ 1 + k^{2} 2^{-\frac{1}{2}C(U)}}{ 1 - k^{2} 2^{-\frac{1}{2}C(U)})} \right)^{k} \pi^{-k} \int_{\text{Re}(y_1) \geq 0, ..., \text{Re}(y_k) \geq 0} \text{exp} \left( - y^{\cal y}y  \right) dy_1 ... dy_k \nonumber \\  &= &  \left( \frac{ 1 + k^{2} 2^{-\frac{1}{2}C(U)}}{ 1 - k^{2} 2^{-\frac{1}{2}C(U)})} \right)^{k} 2^{-k},
\end{eqnarray}
where the two inequalities in the Eq. (\ref{integrals}) above follow from the two sides of Eq. (\ref{sigmaiverse}). Then using the bound 
\begin{equation} \label{boundeasy}
\left( \frac{1+a}{1-a} \right)^{k} \leq 1 + 6ka, 
\end{equation}
valid for all $0 \leq a \leq 1$, we get
\begin{equation}
P \leq (1 + 6k^{3}2^{-C(U)/2}) 2^{-k}.
\end{equation}

By a completely similar argument (using again the two sides of Eq. (\ref{sigmaiverse}) and Eq. (\ref{boundeasy})) we also find 
\begin{equation}
P \geq  \left( \frac{ 1 - k^{2} 2^{-\frac{1}{2}C(U)}}{ 1 + k^{2} 2^{-\frac{1}{2}C(U)})} \right)^{k} 2^{-k} \geq (1 - 6k^{3}2^{-C(U)/2}) 2^{-k}, 
\end{equation}
and we are done.
\end{proof}

\section{Classical upper bounds} \label{upperbounds}

In this section we prove Theorems \ref{independentquerymodeltheorem} and \ref{classsumulasparseU}.

\begin{proof} \textbf{(Theorem \ref{independentquerymodeltheorem})}

Let $(i, j)$ be such that $|U_{i, j}| = 2^{- C(U)/2}$. Assuming that $\text{Re}(U_{i, j}) \geq \text{Im}(U_{i, j})$ \footnote{Otherwise we consider instead the unitary $\sqrt{-1}U$}, we get $\text{Re}(U_{i, j}) \geq 2^{- C(U)/2 + 1}$.

The algorithm for U-$\CC$ in the independent query model works as follows: One queries $f_r(i)$ and $g_r(j)$ over $N = \lceil \text{Re}(U_{i, j})^{-2} \log(1/\epsilon) \rceil$ independent realizations of the oracle (labelled by $r$) and computes 
\begin{equation}
E_N := \frac{1}{N} \sum_{r=1}^{N} f_r(i)g_r(j),
\end{equation}
deciding that the functions are $\U$-correlated if $E_N \geq \text{Re}(U_{i, j})/4$.

In the case of independent $f$ and $g$, we have $\mathbb{E}(f(i)g(j)) = 0$. In the remainder of the proof we show that for $\U$-correlated $f$ and $g$, $\mathbb{E}(f(i)g(j)) \geq \text{Re}(U_{i, j})/2$. Then Chernoff bound gives that the algorithm fails with probability at most $\epsilon$.

Let us turn to the lower bound on $\mathbb{E}(f(i)g(j))$ in the case of $\U$-correlated $f$ and $g$. We have
\begin{equation} \label{productfgdicrete}
\mathbb{E}(f(i)g(j)) = \mathbb{E}(\text{sgn}(\text{Re}(v_i)) \text{sgn}(\text{Re}(Uv)_j)).
\end{equation}
Note that $\mathbb{E}( \text{Re}(v_i) \text{Re}(Uv)_j)) = \text{Re}(U_{i, j}) \geq 2^{-(C(U)/2 + 1)}$. So all we have to do is to check that the discretized version given by Eq. (\ref{productfgdicrete}) has a similar expectation value.

First we write
\begin{eqnarray}
\mathbb{E}(\text{sgn}(\text{Re}(v_i)) \text{sgn}(\text{Re}(Uv)_j)) &=& \Pr \left( \text{Re}(v_i) \text{Re}(Uv)_j \geq 0 \right) -  \Pr \left( \text{Re}(v_i) \text{Re}(Uv)_j < 0 \right) \nonumber \\ &=& 2 \Pr \left( \text{Re}(v_i) \text{Re}(Uv)_j \geq 0 \right) - 1 \nonumber \\ &=& 4 \Pr \left( \text{Re}(v_i) \geq 0 \hspace{0.2 cm} \text{and} \hspace{0.2 cm} \text{Re}(Uv)_j \geq 0  \right) - 1.
\end{eqnarray}
Now define the bivariate normal variable $w := (w_1, w_2)$ with $w_1 = v_i$ and $w_2 := (Uv)_j$. The probability distribution of $w$ is completely characterized by the covariance matrix with entries $\Sigma_{k,l} := \mathbb{E}(w_k w_l^{*})$, which reads 
\begin{equation}
\Sigma = \left(   
\begin{array}{cc}
1 & U_{i, j} \\
U_{i, j}^{*} & 1
\end{array}
\right).
\end{equation}
Then,
\begin{eqnarray}
&&\Pr \left( \text{Re}(v_i) \geq 0 \hspace{0.2 cm} \text{and} \hspace{0.2 cm} \text{Re}(Uv)_j \geq 0  \right) \nonumber \\ &=& |\Sigma|^{-2}\pi^{-2} \int_{\text{Re}(w_1) \geq 0, \text{Re}(w_2) \geq 0} \text{exp}\left( - w^{\cal y}\Sigma^{-1}w \right)dw_1 dw_2 \nonumber \\ &=& |\Sigma|^{-2}\pi^{-2} \int_{\text{Re}(w_1) \geq 0, \text{Re}(w_2) \geq 0} \text{exp}\left( - (1 - |U_{i, j}|)^{-1}w^{\cal y}w + 2\text{Re}(w_1^{*}w_2U_{i, j}) \right)dw_1 dw_2 \nonumber \\  &\geq& |\Sigma|^{-2}\pi^{-2} \int_{\text{Re}(w_1) \geq 0, \text{Re}(w_2) \geq 0} \text{exp}\left( - (1 - |U_{i, j}|)^{-1}w^{\cal y}w \right) \left(1 + 2\text{Re}(w_1^{*}w_2U_{i, j}) \right)dw_1 dw_2 \nonumber \\ &=& \pi^{-2} \int_{\text{Re}(w_1) \geq 0, \text{Re}(w_2) \geq 0} \text{exp}\left( - w^{\cal y}w \right) \left(1 + 2\text{Re}(w_1^{*}w_2U_{i, j})(1 - |U_{i, j}|)^{2} \right)dw_1 dw_2 \nonumber \\ &=& \frac{1}{4} \left(1 + \frac{\text{Re}(U_{i, j})(1 - |U_{i, j}|)^{2}}{8} \right).
\end{eqnarray}
where we used basic facts of Gaussian integrals and that $e^{x} \geq 1 + x$, for $x \geq 0$.
\end{proof}

\vspace{0.5 cm}

We now turn to the proof of Theorem \ref{classsumulasparseU}, which is largely based on recent techniques of Van den Nest for the efficient classical simulation of certain types of quantum states and operations \cite{VdN09}. We will make use of the notion of \textit{computational tractable} states, which are defined below in a slightly more general way than in \cite{VdN09}, in order to accommodate for oracle queries.
\begin{definition}
\cite{VdN09} A state on $n$ qubits $\ket{\psi}$ is $f$-computational tractable given access to the oracle function $f_{\psi} : \{ 0, 1 \}^{m} \rightarrow \{ 0, 1 \}$ (with $m = \poly(n)$) if the following holds
\begin{list}{\quad}{}
\item[] (a) it is possible to sample from the probability distribution $\Pr(x) := |\braket{x}{\psi}|^{2}$ on the set of $n$-bit strings  in $\poly(n)$ time in a classical computer with $\poly(n)$ many queries to $f_{\psi}$, and
\item[] (b) upon input of any bit of $x$, the coefficient $\braket{x}{\psi}$ can be computed in $\poly(n)$ time on a classical computer with $\poly(n)$ queries to $f_{\psi}$.
\end{list}
\end{definition}

Then we have:
\begin{proposition} \label{vandennest}
\cite{VdN09} Let $\ket{\psi}$ and $\ket{\phi}$ be $f$-computational tractable states (given access to oracle $f$) of $n$ qubits each and let $A$ be an efficiently computable sparse $n$-qubit operation with $\Vert A \Vert_{\infty} \leq 1$. Then there exists an efficient classical algorithm to approximate $\bra{\phi}A \ket{\psi}$ with polynomial accuracy in $n$, given access to the oracle $f$.
\end{proposition}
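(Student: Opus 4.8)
The plan is to follow Van den Nest's Monte Carlo strategy \cite{VdN09}: reduce the estimation of $\bra{\phi}A\ket{\psi}$ to estimating the inner product of two fixed vectors, and then estimate that inner product by importance sampling, using the computational tractability of the states to generate the samples and the sparsity of $A$ to evaluate the integrand efficiently.

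First I would set $\ket{\chi} := A\ket{\psi}$, with amplitudes $\chi_x := \bra{x}A\ket{\psi} = \sum_y A_{xy}\psi_y$ (where $\psi_y := \braket{y}{\psi}$), so that the target quantity is $\bra{\phi}A\ket{\psi} = \braket{\phi}{\chi}$. The central estimator is obtained by importance sampling against the distribution $\Pr(x) = |\phi_x|^2$: writing $\phi_x := \braket{x}{\phi}$,
\begin{equation}
\braket{\phi}{\chi} = \sum_x \phi_x^* \chi_x = \sum_x |\phi_x|^2 \frac{\chi_x}{\phi_x} = \mathbb{E}_{x \sim |\phi_x|^2}\left[ \frac{\chi_x}{\phi_x} \right],
\end{equation}
where only indices with $\phi_x \neq 0$ contribute (and these are exactly the ones that can ever be sampled). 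Define the complex random variable $T := \chi_x/\phi_x$ for $x$ drawn from $|\phi_x|^2$. Then $\mathbb{E}(T) = \braket{\phi}{\chi}$, and crucially its second moment is bounded independently of the dimension,
\begin{equation}
\mathbb{E}\left( |T|^2 \right) = \sum_x |\phi_x|^2 \frac{|\chi_x|^2}{|\phi_x|^2} = \sum_x |\chi_x|^2 = \Vert A\ket{\psi} \Vert^2 \leq \Vert A \Vert_{\infty}^2 \leq 1.
\end{equation}

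Next I would verify that $T$ is classically computable with only $\poly(n)$ oracle queries. Sampling $x$ from $|\phi_x|^2$ and evaluating $\phi_x$ are precisely properties (a) and (b) of $f$-computational tractability applied to $\ket{\phi}$. To evaluate $\chi_x = \sum_y A_{xy}\psi_y$ I would invoke that $A$ is efficiently computable and sparse: the row indexed by $x$ has only $\poly(n)$ nonzero entries $A_{xy}$, whose locations and values can be listed in $\poly(n)$ time, and for each such $y$ the amplitude $\psi_y$ is computable by property (b) for $\ket{\psi}$. Hence producing one sample of $T$ costs $\poly(n)$ time and $\poly(n)$ queries to $f$; note that $\ket{\chi}$ itself never needs to be tractable, only evaluable entrywise.

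Finally I would approximate $\braket{\phi}{\chi}$ by the empirical mean of $\poly(n)$ independent copies of $T$, handling its real and imaginary parts separately. Since $\text{Var}(T) \leq \mathbb{E}(|T|^2) \leq 1$, Chebyshev's inequality shows the mean of $N = O(1/\epsilon^2)$ samples is within $\epsilon$ of $\braket{\phi}{\chi}$ with constant probability, and the median-of-means trick (partitioning into $O(\log(1/\delta))$ blocks and taking the median of the block averages) boosts the failure probability to $\delta$ while keeping the total sample count $\poly(n)$ for $\epsilon = 1/\poly(n)$. The only genuinely non-routine step is the variance bound: a naive estimator for $\braket{\phi}{\chi}$ would have entries $\chi_x/\phi_x$ that blow up on indices where $\phi_x$ is exponentially small, but importance sampling exactly against $|\phi_x|^2$ cancels this blow-up and leaves a second moment equal to $\Vert A\ket{\psi}\Vert^2 \leq 1$; everything else is bookkeeping to confirm the per-sample work and query count stay polynomial.
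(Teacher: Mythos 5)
Your proof is correct and follows exactly the Van den Nest importance-sampling argument (sample $x\sim|\phi_x|^2$, evaluate $(A\psi)_x/\phi_x$ using sparsity plus amplitude access, bound the second moment by $\Vert A\Vert_{\infty}^2\leq 1$, and concentrate with Chebyshev/median-of-means) that the paper relies on: the paper's own ``proof'' simply cites \cite{VdN09} and observes that the argument carries over verbatim when the tractability conditions are granted oracle access to $f$. You have in effect written out the deferred argument, correctly noting the two points that make it go through --- the $\poly(n)$ query cost per sample and the dimension-independent variance bound.
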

In \cite{VdN09} Van den Nest proved Proposition \ref{vandennest} in the non-oracular case. However, it is easy to check that his proof carries through without any modification to cover the statement of Proposition  \ref{vandennest}.

\begin{proof} \textbf{(Theorem \ref{classsumulasparseU})}

Let $\tilde U$ be the sparse approximation of $U$ with only $\poly(n)$ many non-zero entries in each row and column and such that $\Vert U - \tilde U \Vert_{\infty} \leq 0.03$. Let $\ket{f}$ and $\ket{g}$ be the states given by Eqs. (\ref{f}) and (\ref{g}). Note that $\ket{f}$ and $\ket{g}$ are $f$- and $g$-computational tractable, respectively. Indeed any of their coefficients can be read directly from the oracles $f$ and $g$, while the probability distribution $\Pr(x) := |\braket{x}{f}|^{2}$ (and analogously $|\braket{x}{g}|^{2}$) is uniform and therefore easily samplable. 

From Proposition \ref{quantumalgorithm} we see it suffices to calculate $|\bra{g}U\ket{f}|^{2}$ to accuracy $< 0.07$ in order to solve U-$\CC$ with high probability and, indeed, 
\begin{eqnarray}
||\bra{g}U\ket{f}|^{2} - |\bra{g}\tilde U\ket{f}|^{2}| &=& |\bra{f}\left( U \ket{g}\bra{g}U^{\cal y} - \tilde U \ket{g}\bra{g} \tilde U^{\cal y}  \right) \ket{f}| \nonumber \\ &\leq& \left \Vert U \ket{g}\bra{g}U^{\cal y} - \tilde U \ket{g}\bra{g} \tilde U^{\cal y} \right \Vert_{1} \nonumber \\ &=&  \left \Vert U \ket{g} - \tilde U \ket{g} \right\Vert . \left \Vert U \ket{g} + \tilde U \ket{g} \right \Vert  \nonumber \\ &\leq&  2 \left \Vert U - \tilde U \right \Vert_{\infty} \leq 0.06 .
\end{eqnarray}
\end{proof}

\section{Random Circuits are approximate $3$-designs} \label{proof3design}

In this section we prove that random quantum circuits of linear length form an approximate unitary 3-design, which is the main technical contribution of the paper. 

We consider two classes of random quantum circuits, both defined as random walks on $\mathbb{U}(2^{n})$:
\begin{itemize}
\item \textit{uniform random circuit}: in each step two indices $i \neq j$ are chosen uniformly at random from $[n]$ and a two-qubit unitary gate $U_{i, j}$ drawn from the Haar measure on $\mathbb{U}(4)$ is applied to qubits $i$ and $j$.
\item \textit{local random circuit}:  in each step of the walk an index $i$ is chosen uniformly at random from $[n]$ and a two-qubit gate $U_{i, i+1}$ drawn from the Haar measure on $\mathbb{U}(4)$ is applied to the two neighbouring qubits $i$ and $i+1$ (we arrange the qubits on a circle, so we identify the $(n+1)$-th qubit with the first). 
\end{itemize}

Throughout this section we will focus on local random circuits and then show how our results can be extended to uniform random circuits.

We will make use the folllowing well-known correspondence of superoperators and operators, which allow us to evaluate the eigenvalues of the former by computing the eigenvalues of the latter. Given a superoperator ${\cal G}$ given by
\begin{equation} \label{Gmatrix}
{\cal G}(X) := \sum_k A_k X B_k^{\cal y},
\end{equation}
we define the operator
\begin{equation}
G := \sum_k A_k \otimes \overline{B}_k,
\end{equation}
with $\overline{X}$ the complex conjugate of $X$. 

Let $X$ be such that $\tr(XX^{\cal y}) = 1$ and ${\cal G}(X) = \lambda X$, for a complex number $\lambda$, i.e. $X$ is an eigenoperator of ${\cal G}$ with eigenvalue $\lambda$. Then defining $\ket{X} := X \otimes \id \ket{\Phi}$, with
\begin{equation}
\ket{\Phi} := \sum_{k} \ket{k} \otimes \ket{k},
\end{equation}
it holds that $G \ket{X} = \lambda \ket{X}$, i.e. $\ket{X}$ is an eingenvector of $G$ with eingenvalue $\lambda$. 

A direct implication of this correspondence is that
\begin{equation} \label{2normsuperinftynormregular}
\Vert {\cal G} \Vert_{2 \rightarrow 2} = \Vert G \Vert_{\infty}. 
\end{equation}

\begin{proof} (Theorem \ref{3design})

Let $\{ \mu(dU), U \}$  be the distribution of unitaries after one step of the random walk according to the local random circuit model. Following Eq. (\ref{Gmatrix}), define
\begin{equation}
G_{\mu^{* k}, t} :=  \int_{\mathbb{U}(d)} \mu^{* k}(dU)  U^{\otimes t} \otimes  \overline{U}^{\otimes t}.
\end{equation}
where $\mu^{* k}$ is the $k$-fold convolution of $\mu$ with itself, i.e. 
\begin{equation}
\mu^{* k} := \int \delta_{U_1...U_k} \mu(dU_1)... \mu(dU_k).
\end{equation}
In the sequel we show that for $t = 2, 3$,
\begin{equation} \label{boundinftynorm}
\Vert G_{\mu^{* k}, t} - G_{\mu_H, t} \Vert_{\infty} \leq \left( 1 - \frac{1}{5n} \right)^{k},
\end{equation}
with $\mu_{H}$ is the Haar measure on $\mathbb{U}(2^{n})$. Then by Def. \ref{tdesgndef} and Eq. (\ref{2normsuperinftynormregular})  we find that $\{ \mu^{* 5n \log(1/\epsilon)}(dU), U \}$ is an $\epsilon$-approximate unitary 3-design, which is the statement of the theorem.

Following \cite{BV10}, we have  $G_{\mu^{* k}, t}  = \left(M_{t, n}\right)^{k}$, with
\begin{equation}
M_{t, n} := \frac{1}{n}\sum_{i=1}^{n} P_{i, i + 1} 
\end{equation}
and
\begin{equation}
P_{i, i+1} := \int_{\mathbb{U}(4)} \mu_H(dU)  U_{i, i+1}^{\otimes t} \otimes  \overline{U}_{i, i+1}^{\otimes t}.
\end{equation}
Moreover the projector onto the eigenvalue-one subspace of $M_t$ is equal to $G_{\mu_H, t}$, since \cite{HL09, BV10}
\begin{equation}
\lim_{k \rightarrow \infty} \Vert G_{\mu^{* k}, t} - G_{\mu_H, t} \Vert_{\infty}  = 0.
\end{equation}
Therefore
\begin{equation} \label{secondgaponG}
\Vert G_{\mu^{* k}, t} - G_{\mu_H, t} \Vert_{\infty} = \left(\lambda_2 \left( M_{t, n} \right)\right)^{k},
\end{equation}
with $\lambda_k(X)$ the $k$-th largest eigenvalue of $X$. 

Lemma \ref{gaponnqubitsfromgapon3qubits} and Lemma \ref{gapequals710} gives that $\lambda_2(M_{t, n}) \leq 1 - \frac{1}{5n}$. Then Eq. (\ref{boundinftynorm}) is a consequence of this bound and Eq. (\ref{secondgaponG}). 
\end{proof}

\vspace{0.3 cm}
The next proposition is the key part of the argument. It shows that in order to upper bound $\lambda_{2}(M_{t, n})$ it is enough to obtain a sufficiently strong upper bound on $\lambda_{2}(M_{t, 3})$. The latter is associated to the convergence time of a random walk on only three qubits and, therefore, can be more easily analysed. 
\begin{lemma} \label{gaponnqubitsfromgapon3qubits}
\begin{equation}
\lambda_{2} \left( M_{t, n} \right) \leq 1 - \frac{3 - 4 \lambda_2 \left(  M_{t, 3} \right)}{n}.
\end{equation}
\end{lemma}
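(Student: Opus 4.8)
The plan is to translate the statement into a lower bound on the spectral gap of a frustration-free, nearest-neighbor Hamiltonian and then run a local (finite-size, Knabe-type) argument that reduces the $n$-qubit gap to a three-qubit gap. First I would observe that each $P_{i,i+1}$ is an orthogonal projector: it is the Haar average over $\mathbb{U}(4)$ of the representation $U \mapsto U^{\otimes t}\otimes\overline{U}^{\otimes t}$ acting on qubits $i,i+1$, and such an average is always the projector onto the invariant subspace of the representation. Setting $Q_{i,i+1} := \id - P_{i,i+1}$ (also a projector) and $A := \sum_{i=1}^{n} Q_{i,i+1}$, one has $\id - M_{t,n} = \frac{1}{n}A$. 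The operator $A$ is frustration-free, its kernel being the common invariant subspace, which coincides with the eigenvalue-one eigenspace of $M_{t,n}$; hence $\lambda_2(M_{t,n}) = 1 - \frac{1}{n}\Delta_n$, where $\Delta_n$ is the spectral gap (the smallest nonzero eigenvalue) of $A$. The lemma is thus equivalent to the $n$-independent bound $\Delta_n \geq 3 - 4\lambda_2(M_{t,3})$.

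Next I would square $A$ and exploit the geometry of the circle. Writing $A^2 = \sum_{i,j} Q_{i,i+1}Q_{j,j+1}$, the diagonal terms give $\sum_i Q_{i,i+1}^2 = A$, the overlapping (nearest-neighbor) terms give $\sum_i \{Q_{i,i+1},Q_{i+1,i+2}\}$, and the terms coming from disjoint bonds are products of commuting projectors, hence projectors themselves and in particular positive semidefinite. Discarding this last, nonnegative contribution yields the operator inequality
\begin{equation}
A^2 \geq A + \sum_{i=1}^{n} \{Q_{i,i+1},Q_{i+1,i+2}\}.
\end{equation}

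I would then repackage the anticommutators in terms of three-qubit block operators $B_i := Q_{i,i+1}+Q_{i+1,i+2}$. Since $B_i^2 = B_i + \{Q_{i,i+1},Q_{i+1,i+2}\}$ and, on the circle, $\sum_i B_i = 2A$ (each bond lies in exactly two consecutive triples), the inequality becomes $A^2 \geq \sum_i B_i^2 - A$. Now each $B_i$ is, up to the identity on the untouched qubits, the two-bond operator $2(\id - M_{t,3})$ on three consecutive qubits, so its smallest nonzero eigenvalue equals $\gamma := 2\left(1 - \lambda_2(M_{t,3})\right)$. For any positive semidefinite operator with gap $\gamma$ one has $B_i^2 \geq \gamma B_i$; summing gives $\sum_i B_i^2 \geq \gamma \sum_i B_i = 2\gamma A$, and therefore $A^2 \geq (2\gamma-1)A$.

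Finally, the elementary spectral fact that $A^2 \geq cA$ forces every nonzero eigenvalue of the Hermitian operator $A$ to be at least $c$ gives $\Delta_n \geq 2\gamma - 1 = 3 - 4\lambda_2(M_{t,3})$; dividing by $n$ completes the argument. I expect the main difficulty to be bookkeeping rather than conceptual: one must verify that the disjoint-bond cross terms are genuinely positive — this is exactly where both the projector property and the nearest-neighbor locality on the circle are essential — and that the circular boundary identifications yield $\sum_i B_i = 2A$ with no over- or under-counting. The two invocations of the inequality ``$B^2 \geq \gamma B$ for a positive operator with gap $\gamma$'' are precisely what manufacture the exact constant $3 - 4\lambda_2(M_{t,3})$.
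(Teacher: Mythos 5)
Your proof is correct and follows essentially the same route as the paper: you recast the bound as a spectral-gap estimate for the frustration-free nearest-neighbour Hamiltonian $\sum_i(\id-P_{i,i+1})$ and then run exactly the squaring argument of the paper's Lemma \ref{boundongapofromlocal} (discarding the commuting disjoint-bond terms and applying the ``$M^2\geq\gamma M$ iff gap $\geq\gamma$'' fact, Lemma \ref{gaptrick}, at both the three-qubit and $n$-qubit levels). The only difference is notational bookkeeping via the block operators $B_i$, which yields the same constant $3-4\lambda_2(M_{t,3})$.
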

\begin{proof}
Define $H_{i, i+1} := \id - P_{i, i + 1}$ and
\begin{equation} \label{defH}
H := \sum_{i=1}^{n} H_{i, i+1} = n(\id - M_{t, n}).
\end{equation}

The operator $H$ is a quantum local Hamiltonian (i.e. a sum of terms which act non-trivially only on neighbouring sites), composed of local projectors $H_{i, i+1}$, with the following properties: 
\begin{itemize}
\item\textit{ periodic boundary conditions}: the $(i +1)$-th site is identified with the first.
\item \textit{zero gound-state energy}: $\lambda_{\min}(H) = 0$, with $\lambda_{\min}(H)$ the minimum eigenvalue of $H$. 
\item\textit{ frustation-freeness}:  every state $\ket{\psi}$ in the grounstate manifold, composed of all eigenvectors with eigenvalue zero, is such that $H_{i, i+1}\ket{\psi} = 0$, for all $i$. 
\end{itemize}

Let $\Delta(X)$ be the \textit{spectral gap} of $X$, i.e. the difference of the second lowest to the lowest eigenvalues. Then from Lemma \ref{boundongapofromlocal} we get
\begin{eqnarray}
\lambda_{2} \left( M_{t, n} \right) &=& 1 - \frac{\Delta(H)}{n} \nonumber \\ 
&\leq& 1 - \frac{2\Delta \left( H_{1, 2} + H_{2,3} \right) - 1 }{n} \nonumber \\ 
&= & 1 - \frac{4\Delta \left( \id - M_{t, 3} \right) - 1 }{n} \nonumber \\ 
&=& 1 - \frac{3 - 4 \lambda_2 \left( M_{t, 3} \right)}{n}.
\end{eqnarray}
\end{proof}

The next lemma appeared e.g. in \cite{AFH09}.

\begin{lemma} \label{boundongapofromlocal}
\cite{AFH09} Let $H = \sum_{i=1}^N H_{i, i+1}$ be a local Hamiltonian with periodic boundary conditions, with $H_{i, i+1}$ projectors and $\lambda_{\min}(H) = 0$. Then the spectral gap of $H$ satisfies
\begin{equation}
\Delta(H) \geq 2 \min_{i \in {1, ..., N}} \Delta \left( H_{i, i+1} + H_{i+1,i+2} \right) - 1   
\end{equation}
\end{lemma}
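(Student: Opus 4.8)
The plan is to reduce the spectral-gap bound to a single operator inequality of the form $H^2 \succeq c\,H$ and then verify it by an elementary expansion of $H^2$. The starting observation is that, since $H \succeq 0$ with $\lambda_{\min}(H)=0$, every positive eigenvalue of $H$ is at least $\Delta(H)$; consequently, for any constant $c\geq 0$ one has $H^2 \succeq c\,H$ if and only if $\Delta(H)\geq c$. Indeed, testing against an eigenvector of $H$ with eigenvalue $\lambda$ reduces $H^2 \succeq c\,H$ to $\lambda(\lambda-c)\geq 0$, which is automatic for $\lambda=0$ and forces $\lambda\geq c$ for $\lambda>0$. It therefore suffices to prove $H^2 \succeq (2g-1)H$, where $g := \min_i \Delta(H_{i,i+1}+H_{i+1,i+2})$.

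First I would expand $H^2 = \sum_{i,j} H_{i,i+1}H_{j,j+1}$ and use that each $H_{i,i+1}$ is a projector ($H_{i,i+1}^2=H_{i,i+1}$) to pull out the diagonal, obtaining $H^2 = H + \sum_{i\neq j} H_{i,i+1}H_{j,j+1}$. The off-diagonal terms split into adjacent pairs (those with $j=i\pm 1$, whose supports overlap on one site) and non-adjacent pairs (disjoint supports). For a non-adjacent pair the two projectors act on disjoint qubits, hence commute, so each unordered pair contributes $2H_{i,i+1}H_{j,j+1}\succeq 0$; discarding this nonnegative contribution yields $H^2 \succeq H + \sum_i (H_{i,i+1}H_{i+1,i+2}+H_{i+1,i+2}H_{i,i+1})$.

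Next I would introduce the adjacent-pair operators $Q_i := H_{i,i+1}+H_{i+1,i+2}$ and record two facts. Using the periodic boundary condition, each $H_{j,j+1}$ belongs to exactly two of the $Q_i$, so $\sum_i Q_i = 2H$. Squaring and again invoking $H_{i,i+1}^2=H_{i,i+1}$ gives $Q_i^2 = H_{i,i+1}+H_{i+1,i+2}+(H_{i,i+1}H_{i+1,i+2}+H_{i+1,i+2}H_{i,i+1})$, so summation identifies the adjacent cross-term sum as $\sum_i Q_i^2 - 2H$. Moreover, the zero-energy ground state of $H$ is annihilated by every local term, so $\lambda_{\min}(Q_i)=0$; by the same eigenvalue argument as above this yields $Q_i^2 \succeq \Delta(Q_i)\,Q_i \succeq g\,Q_i$, whence $\sum_i Q_i^2 \succeq g\sum_i Q_i = 2g\,H$. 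Combining the three inequalities gives $H^2 \succeq H + (2g\,H - 2H) = (2g-1)H$, which is the claimed bound.

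The individual computations are routine; the substantive content is this two-step packaging. The point requiring care — and the step I expect to be the crux — is the positivity of the non-adjacent cross terms, which is precisely what permits us to discard them and is the reason the bound holds on a one-dimensional chain: it hinges on disjoint local terms commuting, and on organising the surviving overlapping terms into the adjacent triples $Q_i$ whose squares are controlled by the local gap $g$. I would also verify the degenerate and small-$N$ cases, and confirm that the frustration-free consequence $\lambda_{\min}(Q_i)=0$ is used correctly, so that $\Delta(Q_i)$ genuinely measures the gap above the zero eigenvalue.
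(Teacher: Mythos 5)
Your proof is correct and follows essentially the same route as the paper: both reduce the claim to the operator inequality $H^2 \succeq (2g-1)H$ via the equivalence $\Delta(M)\geq c \Leftrightarrow M^2 \succeq cM$ for PSD $M$ with zero ground energy, expand $H^2$ dropping the positive non-adjacent cross terms, and control the adjacent cross terms by the local gap of $Q_i = H_{i,i+1}+H_{i+1,i+2}$. The only difference is cosmetic bookkeeping (you sum $Q_i^2 \succeq g\,Q_i$ and subtract $2H$, whereas the paper rearranges the per-$i$ inequality before summing), so there is nothing substantive to add.
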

\begin{proof}
Let $\gamma := 2\min_{i \in {1, ..., N}} \Delta \left( H_{i, i+1} + H_{i+1,i+2} \right) - 1$. Then by Lemma \ref{gaptrick}
\begin{equation}
\left( H_{i, i+1} + H_{i+1,i+2} \right)^2 \geq \frac{\gamma + 1}{2} \left( H_{i, i+1} + H_{i+1,i+2} \right). 
\end{equation}
Rearranging terms in the equation above we get
\begin{equation}
 \frac{1}{2}H_{i, i+1} +  H_{i, i+1}H_{i+1,i+2} + H_{i+1,i+2}H_{i, i+1} + \frac{1}{2}H_{i+1,i+2} \geq \frac{\gamma}{2} \left( H_{i,i+1} + H_{i+1,i+2} \right).
\end{equation}

We have
\begin{eqnarray} \label{eqHsquaretogamma}
H^2 &\geq& \sum_{i=1}^N \left( \frac{1}{2}H_{i, i+1} +  H_{i, i+1}H_{i+1,i+2} + H_{i+1,i+2}H_{i, i+1} + \frac{1}{2}H_{i+1,i+2} \right) \nonumber \\ &\geq& \gamma \sum_{i=1}^N \left( \frac{1}{2} H_{i, i+1} + \frac{1}{2} H_{i+1, i+2} \right) = \gamma H,
\end{eqnarray}
where the first inequality follows from
\begin{eqnarray}
H^2 &=& \sum_{i=1}^N \left( \frac{1}{2}H_{i, i+1} +  H_{i, i+1}H_{i+1,i+2} + H_{i+1,i+2}H_{i, i+1} + \frac{1}{2}H_{i+1,i+2} \right) \nonumber \\ 
&+& \sum_{ |k-l|>1 }  H_{k, k+1} \otimes H_{l, l+1}, 
\end{eqnarray}
and the positivity of $H_{i, i+1}$, and the second one from Eq. (\ref{eqHsquaretogamma}). Then, by Lemma (\ref{gaptrick}), $\Delta(H) \geq \gamma$.
\end{proof}

\begin{lemma} \label{gaptrick}
For a positive semi-definite matrix $M$ with $\lambda_1(M) = 0$,  
\begin{equation}
\lambda_{2}(M) = \max_{\gamma \in \mathbb{R}} \gamma : M^2 \geq\gamma M
\end{equation}
\end{lemma}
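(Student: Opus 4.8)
The final statement to prove is Lemma \ref{gaptrick}: for a positive semi-definite matrix $M$ with $\lambda_1(M) = 0$, we have $\lambda_2(M) = \max_{\gamma \in \mathbb{R}} \{ \gamma : M^2 \geq \gamma M \}$.

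The plan is to work entirely in the eigenbasis of $M$. Since $M$ is positive semi-definite with smallest eigenvalue $\lambda_1(M) = 0$, I would diagonalize $M = \sum_i \lambda_i \ket{e_i}\bra{e_i}$ with $0 = \lambda_1 \leq \lambda_2 \leq \dots \leq \lambda_d$. Both $M^2$ and $M$ are simultaneously diagonal in this basis, so the operator inequality $M^2 \geq \gamma M$ is equivalent to the collection of scalar inequalities $\lambda_i^2 \geq \gamma \lambda_i$ for every $i$. The point of working in the eigenbasis is that an inequality between two commuting (simultaneously diagonalizable) operators reduces to the corresponding inequalities between their eigenvalues.

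I would then analyze these scalar conditions. For the index $i = 1$ with $\lambda_1 = 0$, the inequality $0 \geq 0$ holds trivially for every $\gamma$, so the zero eigenvalue imposes no constraint. For each $i$ with $\lambda_i > 0$, dividing $\lambda_i^2 \geq \gamma \lambda_i$ by $\lambda_i$ gives the equivalent condition $\gamma \leq \lambda_i$. Hence $M^2 \geq \gamma M$ holds if and only if $\gamma \leq \lambda_i$ for all $i$ with $\lambda_i > 0$, i.e. if and only if $\gamma \leq \min_{i : \lambda_i > 0} \lambda_i$. Taking the maximum admissible $\gamma$ yields exactly the smallest nonzero eigenvalue of $M$, which is $\lambda_2(M)$ (using that $\lambda_1(M) = 0$ and the eigenvalues are ordered increasingly).

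This argument is essentially routine linear algebra, so I do not anticipate a genuine obstacle. The one point requiring a little care is the interpretation of $\lambda_2(M)$: I must confirm that the intended meaning of the "second lowest eigenvalue" here is the smallest eigenvalue strictly greater than zero, consistent with the spectral-gap notation $\Delta$ used earlier in the section. If $M$ has a degenerate zero eigenvalue (multiplicity greater than one), one should note that the frustration-free Hamiltonian context guarantees the relevant gap is measured from the ground space, so $\lambda_2(M)$ should be read as the lowest eigenvalue above the ground-state value; the scalar reduction above handles this case automatically, since only the strictly positive eigenvalues constrain $\gamma$.
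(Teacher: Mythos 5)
Your proof is correct and follows essentially the same route as the paper's: diagonalize $M$, reduce the operator inequality $M^2 \geq \gamma M$ to the scalar conditions $\lambda_i^2 \geq \gamma\lambda_i$, and observe that the binding constraint is the smallest nonzero eigenvalue. Your added remark about the degenerate-zero-eigenvalue case and the reading of $\lambda_2$ is a minor point of care the paper glosses over, but the argument is the same.
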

\begin{proof}
Let $(0, \lambda_2, \lambda_3, ...)$ be the vector of ordered eigenvalues of $M$. Then $(0, \lambda_2^2, \lambda_3^2, ...)$ are the ordered eigenvalues of $M^2$ and $M^2 \geq \gamma M$ holds true if, and only if, $\lambda_k^2 \geq \gamma \lambda_k$ for all $k$, which is equivalent to $\lambda_2 \geq \gamma$. 
\end{proof}

The second result we need in the proof of Theorem \ref{3design} is the following upper bound on $\lambda_2(M_{t, 3})$, valid for $t = 2, 3$:
\begin{lemma} \label{gapequals710}
For $t \in \{ 2, 3 \}$,
\begin{equation}
\lambda_2(M_{t, 3}) = \frac{7}{10}.
\end{equation}
\end{lemma}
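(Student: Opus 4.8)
The plan is to exploit the fact that the three-qubit operator is an average of only \emph{two} orthogonal projectors. As is implicit in the proof of Lemma~\ref{gaponnqubitsfromgapon3qubits}, on the open chain $1$--$2$--$3$ one has $M_{t,3}=\frac12\left(P_{1,2}+P_{2,3}\right)$, where each $P_{i,i+1}$ is the Haar twirl over $\mathbb{U}(4)$ on qubits $i,i+1$ and is therefore an orthogonal projector onto the commutant of $V\mapsto V^{\otimes t}$. For an average of two projectors $\frac12(P+Q)$ the two-subspace (Jordan) decomposition applies: the space splits into joint invariant subspaces of dimension one and two, on which the spectrum of $\frac12(P+Q)$ is, respectively, a value in $\{0,\tfrac12,1\}$ and a pair $\tfrac{1\pm\cos\theta_i}{2}$ for the principal angles $\theta_i$ between $\mathrm{Range}(P)$ and $\mathrm{Range}(Q)$. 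The eigenvalue $1$ occupies $\mathrm{Range}(P_{1,2})\cap\mathrm{Range}(P_{2,3})$, the global commutant of dimension $t!$. Since $\tfrac{7}{10}>\tfrac12$, the second eigenvalue must arise from a genuine two-dimensional block, so the whole problem reduces to exhibiting the largest principal-angle cosine $\cos\theta_{\max}<1$ between the two ranges and showing $\lambda_2(M_{t,3})=\tfrac{1+\cos\theta_{\max}}{2}$ with $\cos\theta_{\max}=\tfrac25$ for $t=2,3$; equivalently, the second-largest eigenvalue of $P_{1,2}P_{2,3}P_{1,2}$ equals $\cos^2\theta_{\max}=\tfrac{4}{25}$.

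Next I would cut the ambient space down to a manageable one using symmetry. Because each $P_{i,i+1}$ is bi-invariant under its twirl group, $M_{t,3}$ commutes with the local action $\mathrm{ad}_{W^{\otimes t}}$ at every qubit for all $W\in\mathbb{U}(2)$, hence preserves the Schur--Weyl isotypic decomposition of $\mathrm{End}((\mathbb{C}^2)^{\otimes t})$ at each site. The range of $P_{1,2}$ lies entirely in the component where qubits $1,2$ carry the $\mathbb{U}(2)$-invariant (permutation) part, and likewise for $P_{2,3}$; consequently both projectors are simultaneously nonzero only on the sector $S=\mathrm{span}\{R_1(\pi)R_2(\sigma)R_3(\tau):\pi,\sigma,\tau\in S_t\}$ in which all three sites carry permutation operators. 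On every other sector at most one of the two projectors survives, so there $M_{t,3}$ acts as $\tfrac12 P$ with $P$ a projector and contributes only eigenvalues $\le\tfrac12<\tfrac{7}{10}$. It therefore suffices to diagonalize $M_{t,3}$ on $S$, where $\mathrm{Range}(P_{1,2})\cap S=\mathrm{span}\{R_1(\pi)R_2(\pi)R_3(\tau)\}$ and $\mathrm{Range}(P_{2,3})\cap S=\mathrm{span}\{R_1(\pi)R_2(\sigma)R_3(\sigma)\}$.

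The concrete computation then rests on the overlaps of permutation operators under the twirl, which follow from the Gram/Weingarten data of $\mathbb{U}(4)$ and $\mathbb{U}(2)$. For $t=2$ the space $S$ is $8$-dimensional, with basis $R_1(a)R_2(b)R_3(c)$, $a,b,c\in\{\mathbb{I},\mathrm{SWAP}\}$, and the only nontrivial action is $P_{1,2}\big(\mathrm{SWAP}_1\otimes\mathbb{I}_2\big)=\tfrac25\big(\mathbb{I}+\mathrm{SWAP}_1\otimes\mathrm{SWAP}_2\big)$, the coefficient $\tfrac25$ being exactly $\cos\theta_{\max}$. Using the $\mathbb{I}\leftrightarrow\mathrm{SWAP}$ flip symmetry, which commutes with $M_{t,3}$, the $8\times8$ matrix block-diagonalizes into two $4\times4$ blocks, each with characteristic polynomial proportional to $(1-\lambda)\lambda\big[(\tfrac12-\lambda)^2-\tfrac1{25}\big]$ and hence spectrum $\{1,\tfrac{7}{10},\tfrac{3}{10},0\}$, giving $\lambda_2=\tfrac{7}{10}$. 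For $t=3$ I would run the same scheme, now taking care that the six permutation operators on a single qubit are linearly dependent (the antisymmetrizer $\sum_{\pi}\mathrm{sgn}(\pi)R(\pi)$ vanishes on $(\mathbb{C}^2)^{\otimes3}$, so only Young diagrams with at most two rows occur), and reducing the resulting matrix via the reflection symmetry $1\leftrightarrow3$, the global replica symmetry $\mathrm{ad}_{R_{123}(\rho)}$ for $\rho\in S_3$, and the flip, so that the surviving blocks become small enough to diagonalize by hand and again produce second eigenvalue $\tfrac{7}{10}$.

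The main obstacle is the $t=3$ case. Both the bookkeeping of the linear dependence among single-qubit permutation operators and the organization of the several commuting symmetries into small diagonalizable blocks are delicate, and the crux is to verify that \emph{no} block yields an eigenvalue strictly between $\tfrac{7}{10}$ and $1$, i.e. that $\cos\theta_{\max}$ remains exactly $\tfrac25$ for $t=3$, matching the $t=2$ value. Establishing this coincidence, rather than the comparatively routine two-projector reduction, is where the real work lies.
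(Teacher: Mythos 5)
Your strategy is essentially the paper's -- restrict to the sector in which all three sites carry permutation operators, then diagonalize explicitly -- dressed in a cleaner framework. The paper performs the same first reduction via the direct sum $P_{12}\otimes\id_3+\id_1\otimes P_{23}=(P_{12}\otimes P_3+P_1\otimes P_{23})\oplus(P_{12}\otimes P_3^{\perp})\oplus(P_1^{\perp}\otimes P_{23})$, discarding the last two blocks because they only contribute eigenvalues $\le 1/2$ to $M_{t,3}$; it then builds a Hilbert--Schmidt orthonormal basis $R_k$ of the permutation span at each site, re-expands $P_{12}$ and $P_{23}$ in the product basis via a change-of-basis matrix $B$, and diagonalizes the resulting matrix directly. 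Your additional observation -- that $M_{t,3}$ is half a sum of two orthogonal projectors, so the Jordan two-subspace decomposition reduces everything to the largest nontrivial principal-angle cosine, equivalently to $\lambda_2(P_{12}P_{23}P_{12})$ -- is a nicer organizing principle, and your $t=2$ numbers are correct: the two nontrivial principal angles both have $\cos\theta=2/5$, giving the spectrum $\{1,1,\tfrac{7}{10},\tfrac{7}{10},\tfrac{3}{10},\tfrac{3}{10},0,0\}$ on the $8$-dimensional sector. One caution: the operators $R_1(a)R_2(b)R_3(c)$ are not Hilbert--Schmidt orthogonal (on a single site $\langle\id,\mathrm{SWAP}\rangle=2\neq 0$), so the coefficient $2/5$ appearing in $P_{1,2}\bigl(\mathrm{SWAP}_1\otimes\id_2\bigr)=\tfrac{2}{5}\bigl(\id+\mathrm{SWAP}_1\otimes\mathrm{SWAP}_2\bigr)$ is not literally a principal-angle cosine until you pass to orthonormal bases of the two ranges; it happens to equal the right number, but that step needs justification.

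The genuine gap is $t=3$, which is the case actually needed for Theorem~\ref{3design}, and you leave it as a plan rather than a proof. You correctly identify the obstacles (the linear dependence among the six single-site permutation operators on $(\mathbb{C}^2)^{\otimes 3}$, and the need to exclude any eigenvalue in the interval $(\tfrac{7}{10},1)$), but asserting that the symmetry-reduced blocks "become small enough to diagonalize by hand and again produce second eigenvalue $\tfrac{7}{10}$" is not a verification that $\cos\theta_{\max}=\tfrac{2}{5}$ persists at $t=3$; there is no structural reason offered for why the $t=2$ value should recur, and that coincidence is precisely what must be checked. The paper settles it by brute force: it takes the explicit orthonormal basis of Eggeling and Werner \cite{UUU} (five nonvanishing basis elements per site for qubits), writes $X=P_{12}\otimes P_3+P_1\otimes P_{23}$ as a $125\times 125$ matrix in the product basis, and diagonalizes it symbolically, finding second eigenvalue $7/5$. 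Until you carry your $t=3$ computation to completion -- or perform an equivalent explicit diagonalization -- that half of the lemma, and with it the $3$-design result, remains unproven in your write-up.
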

\begin{proof}

The operator $M_{t,3}$ has the following form
\be
M_{t,3}=\frac12(P_{12}\ot \id_3 +\id_1\ot P_{23})
\ee

In analogy to the definition of $P_{i, i+1}$, we define the projectors $P_i$ as follows
\begin{equation}
P_{i} := \int_{\mathbb{U}(2)} \mu_H(dU)  U_{i}^{\otimes t} \otimes \overline{U}_{i}^{\otimes t}.
\end{equation}
We also consider the associate superoperator 
\begin{equation}
{\cal P}_i(X) := \int_{\mathbb{U}(2)} \mu_H(dU)  U_{i}^{\otimes t} X (U_{i}^{\cal y})^{\otimes t}.
\end{equation}

From Schur duality we find that all operators $X$ invariant under ${\cal P}_{i}$ can be written as a
sum of permutation operators. In more detail, consider the representation $V_{\pi, i}$ of the symmetric group $S_t$ acting 
on ${\cal H}_i^{\otimes t}$ given by 
\begin{equation}
V_{\pi, i} \ket{k_1} \otimes ... \otimes \ket{k_t} = \ket{k_{\pi^{-1}(1)}} \otimes ... \otimes \ket{k_{\pi^{-1}(t)}},
\end{equation}
for every $\pi \in S_t$ and $\ket{k_l} \in {\cal H}_i$. Then it follows that any $X$ satisfying ${\cal P}_i(X) = X$ is such that 
\begin{equation}
X = \sum_{\pi} c_{\pi} V_{\pi, i},
\end{equation}
for complex numbers $c_{\pi}$. Moreover, the subspace defined by $P_i$ is spanned by the (overcomplete) basis given by  the non-normalized vectors $\ket{V_{\pi, i}} := V_{\pi, i} \otimes \id \ket{\Phi}$.

Likewise we define
\begin{equation}
{\cal P}_{i, i+1}(X) := \int_{\mathbb{U}(4)} \mu_H(dU)  U_{i, i+1}^{\otimes t} X (U_{i, i+1}^{\cal y})^{\otimes t},
\end{equation}
and again we find that any $X$ invariant under ${\cal P}_{i, i+1}$ can be written as a sum of permutartion operators $V_{\pi, (i, i+1)}$, now permuting the $t$ copies of the Hilbert space ${\cal H}_{i, i+1}^{\otimes t}$. 

Since $V_{\pi, (i, i+1)} = V_{\pi, i} \otimes V_{\pi, i+1}$ it follows that $P_{i,i+1}\subset P_i\ot P_{i+1}$ and thus
\begin{equation}
P_{12}\ot \id_3 +\id_1\ot P_{23}=
(P_{12}\ot P_3 +P_1\ot P_{23}) \oplus (P_{12}\ot P^\perp_3 )\oplus (P^\perp_1\ot P_{23})
\end{equation}
Since the last two terms in the direct sum above have eigenvalues $0$ and $1$, it follows that $\lambda_2(M_{t,3})=\lambda_2(X)/2$, with $X$ given by 
\be
X=P_{12}\ot P_3 +P_1\ot P_{23}.
\ee 
Note that the support of $X$ is contained in $P_1\ot P_2\ot P_3$. 


Now, for any Hilbert space $\hcal$, in a subspace spanned by swaps acting on 
$\hcal^{\ot t}$ one can construct a basis of operators which is 
orthonormal in Hilbert-Schmidt scalar product:
\be
R_k=\sum_{\pi} b_{k\pi} V_{\pi}
\label{eq:ort-basis}
\ee
where the coefficients $b_{k\pi}$ do not depend on the dimension of $\hcal$,
but only on $t$ (we do not write this dependence explicitly). 
Note that depending on the dimension of $\hcal$, some $R_k$ may vanish. 
Using the operators $R_k$ we can write:
\be
P_{i,i+1}=\sum_k|R_k^{(i,i+1)}\>\<R_k^{(i,i+1)}|, \quad P_i=
\sum_k|R_k^{(i)}\>\<R_k^{(i)}|.
\ee
To diagonalize the operator $X$ we need to represent $P_{1,2}$ and $P_{2, 3}$ 
in terms of product basis $R^{(1)}_k\ot R^{(2)}_k$ and $R^{(2)}_k\ot R^{(3)}_k$, 
respectively. To this end we use \eqref{eq:ort-basis}, which in our case read
\bea
&&R_k^{(1,2)}=\sum_\pi b_{k\pi} V_{\pi, 1}\ot V_{\pi, 2}\nonumber\\
&&R_k^{(1)}=\sum_\pi b_{k\pi} V_{\pi, 1}\nonumber\\
&&R_k^{(2)}=\sum_\pi b_{k\pi} V_{\pi, 2}.\\
\eea
A simple calculation gives
\be
R_k^{1,2}=\sum_{s,u} r^{(k)}_{s,u} R_s\ot R_u
\ee
where the coefficients $r^{(k)}_{s,u}$ form a matrix given by
\be
r^{(k)}=(B^{-1})^T A^{(k)} B^{-1}
\ee
with $B$ defined as the matrix with entries $b_{kl}$ and $A^{(k)}$ the diagonal matrices
\be
A^{(k)}_{ij}=\delta_{ij} b_{ki}.
\ee

In this way, from the matrix $B$ we can obtain the matrix elements 
of the projectors $P_{1,2}$ and $P_{2, 3}$ in the product bases $R^{(1)}_k\ot R^{(2)}_l$ 
and $R^{(2)}_k\ot R^{(3)}_l$, respectively. These, in turn, allow us to obtain the matrix elements of $X$ and to compute 
its eigenvalues. In the following we perform the analysis for $t=2$ and $t=3$. 

{\it 2-design.} For $t=2$, the basis is given 
by (suitably normalized) projectors onto the symmetric and the antisymmetric subspace.
The matrix $B$ is then given by 
\be
B=
\left[
\begin{array}{cc}
\frac{1}{\sqrt3}&\frac{1}{\sqrt3}\\
\frac{1}{2}&-\frac{1}{2}\\
\end{array}
\right]
\ee
where the basis of swaps was ordered as $\{(), (12)\}$ \footnote{Here $()$ labels the trivial permutation and $(12)$ the swap of systems 1 and 2.}, while the matrices $r^{(k)}$ are given by 
\be
r^{(1)}=
\frac12\left[
\begin{array}{cccc}
0&0&0&0\\
0&1&1&0\\
0&1&1&0\\
0&0&0&0\\
\end{array}
\right],\quad
r^{(2)}=
\frac12\left[
\begin{array}{cccc}
\alpha&0&0&\sqrt{\alpha\beta}\\
0&0&0&0\\
0&0&0&0\\
\sqrt{\alpha\beta}&0&0&\beta\\
\end{array}
\right].
\ee
with $\alpha=\frac95,\beta=\frac15$. Diagonalizing $X$, we obtain that it has the second largest eigenvalue $7/5$ \footnote{We have diagonalized the matrix $X$ using the symbolic manipulation program \textit{Mathematica}.},
which gives $\lambda_2(M_{3,t}) = 7/10$ for $t=2$.

{\it 3-design.}
Here we exploit the orthonormal basis constructed in \cite{UUU}:
\bea
&&R_+=\frac{1}{12} (\id + V_{(12)}+V_{(13)}+V_{(23)} + V_{(123)}+V_{(132)}),\nonumber\\
&&R_-=\frac{1}{12} (\id -V_{(12)}-V_{(13)}-V_{(23)} + V_{(123)}+V_{(132)}),\nonumber\\
&&R_0=\frac12(\id-R_+-R_-),\nonumber\\
&&R_1=\frac16(2V_{(23)}-V_{(13)}- V_{(12)}),\nonumber\\
&&R_2=\frac{1}{2\sqrt{3}}(V_{(12)}-V_{(13)}), \nonumber\\
&&R_3=\frac{i}{2\sqrt{3}}(V_{(123)}-V_{(132)}).
\eea
where permutations are written by means of cycles. The related matrix $B$ is given by 
\be
B=\frac12\left[
\begin{array}{cccccc}
\frac23&0&0&0&-\frac13&-\frac13\\
0&-\frac13&\frac23&-\frac13&0&0\\
0&\frac{1}{\sqrt{3}}&0&-\frac{1}{\sqrt{3}}&0&0\\
0&0&0&0&\frac{i}{\sqrt{3}}&-\frac{i}{\sqrt{3}}\\
\frac16&\frac16&\frac16&\frac16&\frac16&\frac16\\
\frac16&-\frac16&-\frac16&-\frac16&\frac16&\frac16\\	
\end{array}
\right]
\ee
where the basis of swaps is ordered as follows $\{(),(12),(23),(13),(123),(132)\}$.
Since we work with qubits, $R_3=0$ and hence we have $5$ basis elements and $X$ is 
a $125 \times 125 $ matrix. We have computed its 
matrix elements and then its eigenvalues$^{11}$, and found that again the second largest eigenvalue is equal to $7/5$.  
\end{proof}

\vspace{0.2 cm}
Lastly, let us discuss the case of uniform random circuits. We have: 
\begin{proposition}
$5n^{2}\log(1/\epsilon)$-sized uniform random circuits form an $\epsilon$-approximate unitary 3-design.
\end{proposition}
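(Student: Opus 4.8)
The plan is to mirror the spectral-gap argument used for the local model in the proof of Theorem~\ref{3design}, replacing the nearest-neighbour averaging operator by its all-pairs analogue and then comparing the two. For the uniform random circuit one step of the walk induces the transition operator
\begin{equation}
\widetilde{M}_{t,n} := \frac{1}{\binom{n}{2}} \sum_{i < j} P_{i,j}, \qquad P_{i,j} := \int_{\mathbb{U}(4)} \mu_H(dU)\, U_{i,j}^{\otimes t} \otimes \overline{U}_{i,j}^{\otimes t},
\end{equation}
the obvious analogue of $P_{i,i+1}$ for a Haar gate on the (not necessarily adjacent) qubits $i,j$. Exactly as in the local case we have $G_{\widetilde{\mu}^{* k}, t} = (\widetilde{M}_{t,n})^{k}$, the limiting projector is again $G_{\mu_H, t}$, and hence $\Vert G_{\widetilde{\mu}^{* k}, t} - G_{\mu_H, t} \Vert_{\infty} = (\lambda_2(\widetilde{M}_{t,n}))^{k}$. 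So it suffices to establish the single bound $\lambda_2(\widetilde{M}_{t,n}) \leq 1 - \frac{1}{5 n^{2}}$: then taking $k = 5 n^{2}\log(1/\epsilon)$ gives $(1 - \tfrac{1}{5n^{2}})^{k} \leq \epsilon$, which by Def.~\ref{tdesgndef} and Eq.~(\ref{2normsuperinftynormregular}) is precisely the claimed $\epsilon$-approximate unitary $3$-design.

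To get the gap bound I would set up the frustration-free Hamiltonian $\widetilde{H} := \sum_{i<j} H_{i,j} = \binom{n}{2}(\id - \widetilde{M}_{t,n})$ with $H_{i,j} := \id - P_{i,j} \geq 0$, and let $H = \sum_{i} H_{i,i+1}$ be the local Hamiltonian of Lemma~\ref{gaponnqubitsfromgapon3qubits}. Since the nearest-neighbour pairs are a subset of all pairs and every $H_{i,j}$ is positive semidefinite, we immediately have $\widetilde{H} \geq H \geq 0$. The crucial step is that the two Hamiltonians share the same ground space $K$. The inclusion $\ker \widetilde{H} \subseteq \ker H$ is immediate from $\widetilde{H} \geq H \geq 0$; for the reverse inclusion one uses that both ground spaces equal $\mathrm{Im}(G_{\mu_H,t})$, the span of the global permutation-operator invariants. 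This is exactly the fixed-point space to which each random circuit converges, and it coincides for the two models because both the cycle of neighbouring pairs and the complete graph of all pairs are connected, so in either case the Haar two-qubit gates generate the full invariant algebra. This is the same fact that already underlies the identification of the limit with $G_{\mu_H,t}$ in the proof of Theorem~\ref{3design}, following \cite{HL09, BV10}.

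With the common kernel $K := \ker \widetilde{H} = \ker H$ in hand, I would restrict both operators to $K^{\perp}$: for every unit vector $\ket{\psi} \in K^{\perp}$ one has $\bra{\psi} \widetilde{H} \ket{\psi} \geq \bra{\psi} H \ket{\psi}$, and minimising the larger quadratic form over the same set yields $\Delta(\widetilde{H}) = \lambda_{\min}(\widetilde{H}|_{K^{\perp}}) \geq \lambda_{\min}(H|_{K^{\perp}}) = \Delta(H)$. The local proof already supplies the constant lower bound $\Delta(H) \geq 2\Delta(H_{1,2}+H_{2,3}) - 1 = 3 - 4\lambda_2(M_{t,3}) = \tfrac{1}{5}$, combining Lemmas~\ref{boundongapofromlocal} and \ref{gapequals710}. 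Therefore
\begin{equation}
\lambda_2(\widetilde{M}_{t,n}) = 1 - \frac{\Delta(\widetilde{H})}{\binom{n}{2}} \leq 1 - \frac{1}{5\binom{n}{2}} = 1 - \frac{2}{5 n(n-1)} \leq 1 - \frac{1}{5 n^{2}},
\end{equation}
which is exactly the bound required in the first paragraph. The main obstacle is the equality of ground spaces; the rest is the operator inequality $\widetilde{H} \geq H$ together with the edge count $\binom{n}{2}$ that turns the constant gap $1/5$ into the $1/(5n^{2})$ convergence rate. Since that equality is nothing more than the convergence-to-Haar statement already invoked for the local model, the genuinely new content here is only the comparison step and the accompanying change of normalisation.
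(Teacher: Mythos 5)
Your proposal is correct and follows essentially the same route as the paper: compare $\widetilde{H}=\sum_{i<j}H_{i,j}$ with the local Hamiltonian $H$ via $\widetilde{H}\geq H$ and the common ground space, then invoke $\Delta(H)\geq 1/5$ from Lemmas \ref{gaponnqubitsfromgapon3qubits} and \ref{gapequals710}. If anything, your treatment is slightly more careful than the paper's, which normalises the all-pairs transition operator by $n^{2}$ rather than $\binom{n}{2}$ and is terser about why the two ground spaces coincide.
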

\begin{proof}
Following the proof of Theorem \ref{3design} is suffices to show that for $t=3$,
\begin{equation} \label{boundgapuniformmodel}
\lambda_{2}(N_{t, n}) \leq 1 - \frac{1}{5n^{2}}
\end{equation}
with
\begin{equation}
N_{t, n} := \frac{1}{n^{2}} \sum_{i < j} P_{i, j}
\end{equation}
and
\begin{equation}
P_{i, j} := \int_{\mathbb{U}(4)} \mu_{H}(dU) U_{i, j}^{\otimes t} \otimes \overline{U}_{i, j}^{\otimes t}.
\end{equation}

Defining
\begin{equation}
\tilde H := \sum_{i, j} H_{i, j} = n^{2}(\id - N_{t, n})
\end{equation}
with $H_{i, j} = \id - P_{i, j}$, we find Eq. (\ref{boundgapuniformmodel}) to be equivalent to $\Delta(\tilde H) \geq 1/5$. 

Both $\tilde H$ and $H$ (given by Eq. (\ref{defH})) have the same groundspace $S_{0}$. Furtermore, $\tilde H \geq H$. Then
\begin{equation}
\Delta(\tilde H) \geq \min_{\ket{\psi} \perp S_{0}} \bra{\psi} \tilde H \ket{\psi} \geq  \min_{\ket{\psi} \perp S_{0}} \bra{\psi} H \ket{\psi} = \Delta(H) \geq \frac{1}{5},
\end{equation}
where the last inequality follows from Lemmas \ref{gaponnqubitsfromgapon3qubits} and \ref{gapequals710}.
\end{proof}

\section{Acknowledgments}

We would like to thank Salman Beigi for useful correspondence. FB is supported by a "Conhecimento Novo" fellowship from the Brazilian agency Funda\c{c}\~ao de Amparo a Pesquisa do Estado de Minas Gerais (FAPEMIG). MH is supported by EU grant QESSENCE and by Polish Ministry of Science and Higher
Education grant N N202 231937. FB would like to thank the hospitality of the National Quantum Information Center of Gdansk where part of this work was done. FB and MH thank the Intitute Mittag Leffler for their hospitality, where (another) part of this work was done.

\end{document}